\def\@email#1#2{%
 \endgroup
 \patchcmd{\titleblock@produce}
  {\frontmatter@RRAPformat}
  {\frontmatter@RRAPformat{\produce@RRAP{#1\href{mailto:#2}{#2}}}\frontmatter@RRAPformat}
  {}{}
}%
\def\LEFTRIGHT#1#2#3{\left#1 #3 \right#2}
\newcommand{\HH}{\mathbb{H}}
\newcommand{\R}{\mathbb{R}}
\newcommand{\N}{\mathbb{N}}
\newcommand{\C}{\mathbb{C}}
\newcommand{\Z}{\mathbb{Z}}
\renewcommand{\le}{\leqslant}
\renewcommand{\ge}{\geqslant}
\renewcommand{\d}{\mathrm{d}}  
\let\emptyset\varnothing
\let\comp\circ
\let\Re\undefined
\let\Im\undefined
\DeclareMathOperator{\Re}{Re}
\DeclareMathOperator{\Im}{Im}
\DeclareMathOperator{\e}{e}
\DeclareMathOperator{\tr}{tr}
\newtheoremstyle{ttheorem}%
       {1.8ex\@plus1ex}                
       {2.1ex\@plus1ex\@minus.5ex}      
       {\itshape}           
       {0pt}                   
       {\bfseries}          
       {.}                  
       {.5em}               
       {}                
\newtheoremstyle{ddefinition}%
       {1.8ex\@plus1ex}                
       {2.1ex\@plus1ex\@minus.5ex}      
       {}           
       {0pt}                   
       {\bfseries}           
       {.}                  
       {.5em}               
       {}                
\newtheoremstyle{rremark}%
       {1.8ex\@plus1ex}                
       {2.1ex\@plus1ex\@minus.5ex}      
       {\normalfont}        
       {0pt}                   
       {\bfseries}           
       {.}                  
       {.5em}               
       {}                   
\theoremstyle{ttheorem}
\newtheorem{theorem}{Theorem}
\newtheorem{lemma}[theorem]{Lemma}
\newtheorem{corollary}[theorem]{Corollary}
\theoremstyle{ddefinition}
\newtheorem{definition}[theorem]{Definition}
\newtheorem{ass}[theorem]{Assumption}
\theoremstyle{remark}
\newtheorem{remark}[theorem]{Remark}
\newtheorem{myremarks}[theorem]{Remarks}
\newenvironment{remarks}{\begin{myremarks}\begin{nummer}}%
    {\end{nummer}\end{myremarks}}
\newcounter{numcount}
\newcommand{\labelnummer}{(\roman{numcount})}%
\providecommand{\showkeyslabelformat}[1]{\relax}        
\let\mysaveformat\showkeyslabelformat                   %
\def\myformat#1{\raisebox{-1.5ex}{\mysaveformat{#1}}}   %
\newenvironment{nummer}%
  {\let\curlabelspeicher\@currentlabel%
    \begin{list}{\textup{\labelnummer}}%
      {\usecounter{numcount}\leftmargin0pt%
        \topsep0.5ex\partopsep2ex\parsep0pt\itemsep0ex\@plus1\p@%
        \labelwidth2.5em\itemindent3.5em\labelsep1em%
      }%
    \let\saveitem\item%
    \def\item{\saveitem%
      \def\@currentlabel{\curlabelspeicher\kern.1em\labelnummer}}%
    \let\savelabel\label%
    \def\label##1{{\ifnum\thenumcount=1\let\showkeyslabelformat\myformat\fi\savelabel{##1}}%
										{\def\@currentlabel{\labelnummer}%
									 	\let\showkeyslabelformat\@gobble
									 	\savelabel{##1item}%
										}%
	   							}%
  }{\end{list}}%
\begin{document}

\title{Stability of a Szeg\H{o}-type asymptotics}
\thanks{This paper is dedicated to Abel Klein in honour of his numerous fundamental contributions to mathematical physics. As a mentor and friend of the first-named author, Abel has been a constant source of inspiration to him.}

\author{Peter M\"uller}%
\affiliation{Mathematisches Institut, Ludwig-Maximilians-Universit\"at M\"unchen,
  Theresienstra\ss{e} 39, 80333 M\"unchen, Germany}
\email[Author to whom correspondence should be addressed: ]{mueller@lmu.de}

\author{Ruth Schulte}%
\affiliation{Mathematisches Institut, Ludwig-Maximilians-Universit\"at M\"unchen,
  Theresienstra\ss{e} 39, 80333 M\"unchen, Germany}
\date{\today}

\begin{abstract}
	We consider a multi-dimensional continuum Schr\"odinger operator $H$ which is given by a perturbation 
	of the negative Laplacian by a compactly supported bounded potential. 
	We show that, for a fairly large class of test functions, the second-order Szeg\H{o}-type asymptotics for the spatially truncated Fermi projection of $H$ is independent of the potential and, thus, identical to the known asymptotics of the Laplacian. 
\end{abstract}

\maketitle

%
\section{Introduction and Result}
%

A classical result of Szeg\H{o} describes the asymptotic growth of the determinant of a truncated 
Toeplitz matrix as the truncation parameter tends to infinity.\cite{Szego15, Szego52}
Jump discontinuities in the spectral function (or symbol) of the Toeplitz matrix are of crucial 
importance for the growth of the subleading term in such an asymptotic expansion.\cite{FiHa69, Basor86}

Recent years have witnessed considerable interest in Szeg\H{o}-type asymptotics for spectral 
projections of Schr\"odinger operators.\cite{Wolf:2006ek, LeschkeSobolevSpitzer14,%
	PhysRevLett.113.150404, ElgartPasturShcherbina2016, Dietlein18,
	PfirschSobolev18, MuPaSc19, MuSc20, CharlesEst20, LeSoSp21, Pfeiffer21, PfSp22}
The first mathematical proof of such an asymptotics was established by Leschke, Sobolev and Spitzer
\cite{LeschkeSobolevSpitzer14} and relies on extensive work 
of Sobolev,\cite{MR3076415, Sobolev:2014ew} making rigorous a long-standing conjecture of 
Widom.\cite{Widom:1982uz}
The authors of Ref.~\onlinecite{LeschkeSobolevSpitzer14} consider the simplest and most 
prominent Schr\"odinger operator,  
the self-adjoint (negative) Laplacian 
$H_{0}:= -\Delta$. It acts on a dense domain in the Hilbert space $L^{2}(\R^{d})$ of complex-valued 
square-integrable functions over $d$-dimensional Euclidean space $\R^{d}$. 
More precisely, given a (Fermi) energy $E>0$, they  consider the spectral 
projection $1_{<E}(H_{0})$ of $H_{0}$ associated with the interval ${}]-\infty, E[\,$. 
Besides physical motivation, this spectral projection provides a prototypical example for a symbol with 
a single discontinuity. The spectral projection $1_{<E}(H_{0})$ gives rise to a Wiener--Hopf operator by
truncation with the multiplication operator $1_{\Lambda_{L}}$ from the left and from the right. 
The truncation corresponds to multiplication with the indicator function of the spatial subset
$\Lambda_{L} := L \cdot \Lambda \subset\R^{d}$, $L>0$, which is the scaled version of some
``nice'' bounded subset $\Lambda \subset \R^{d}$. We state the details for $\Lambda$ in 
Assumption~\ref{ass:l}(i) below.
The final ingredient is a ``test function'' $h : [0,1] \rightarrow\C$, which is piecewise continuous 
and vanishes at zero, $h(0) =0$. Furthermore it is required that $h$ grows
at most algebraically near the endpoints of the interval, that is, there exists a ``H\"older exponent'' $\alpha >0$ such that  
\begin{equation}
 	h(\lambda) = O(\lambda^{\alpha}) \quad\text{and}\quad 
	h(1) - h(1-\lambda) = O(\lambda^{\alpha}) \quad\text{as}\quad  \lambda \searrow 0.
\end{equation}
Here, we employed the Bachmann--Landau notation for asymptotic equalities: 
We will use the big-$O$ and little-$o$ symbols throughout the paper. 
Under these hypotheses, the second-order asymptotic formula for the trace
\begin{equation}
	\label{eq:widom}
 	\tr\big\{h\big(1_{\Lambda_L}1_{<E}(H_0)1_{\Lambda_L}\big)\big\} 
	=N_{0}(E) h(1)|\Lambda|L^{d} + \Sigma_{0}(E) I(h) \, |\partial\Lambda| L^{d-1} \ln L 
	+ {o}\big(L^{d-1} \ln L\big) 
\end{equation}
as $L\to\infty$ is proved in Ref.~\onlinecite{LeschkeSobolevSpitzer14}. 
Here, $|\Lambda|$ denotes the (Lebesgue) volume of $\Lambda$ and $|\partial\Lambda|$ denotes the surface area of the 
boundary $\partial\Lambda$ of $\Lambda$.
The leading-order coefficient in \eqref{eq:widom} is determined by the integrated density of states
\begin{equation}
	\label{eq:n0}
 	N_{0}(E) := \frac{1}{\Gamma[(d+1)/2]} \, 
	\bigg(\frac{ E}{4\pi}\bigg)^{d/2},
\end{equation}
of $H_{0}$. Here, $\Gamma$ denotes Euler's gamma function.
The coefficient of the subleading term factorises into a product of 
\begin{equation}
	\label{eq:i-of-h}
	I(h):= \frac{1}{4\pi^{2}} \int_0^1 \d\lambda \,\frac{h(\lambda) - \lambda h(1)}{\lambda(1-\lambda)}
\end{equation}
and
\begin{equation}
	\label{eq:sigma0}
 	\Sigma_{0}(E)  := \frac{2}{\Gamma[(d+1)/2]} \, 
	\bigg(\frac{ E}{4\pi}\bigg)^{(d-1)/2}.
\end{equation}
We follow the usual terminology and refer to \eqref{eq:widom} as a (second-order) Szeg\H{o}-type asymptotics.
The occurrence of the logarithmic factor $\ln L$ multiplying the surface area $L^{d-1}$ in 
the subleading term of \eqref{eq:widom} is attributed to the discontinuity of the symbol $1_{<E}$ 
together with the \emph{dynamical} delocalisation of the Schr\"odinger 
time evolution generated by the Laplacian.\cite{MuPaSc19}
In the context of entanglement entropies for non-interacting Fermi gases, see \eqref{eq:EE} below, the
occurrence of this additional $\ln L$-factor is also coined an 
\emph{enhanced area law}.\cite{LeschkeSobolevSpitzer14}

A natural question concerns the fate of the asymptotics \eqref{eq:widom} when $H_{0} = - \Delta$ 
is replaced by a  general self-adjoint Schr\"odinger operators $H:=-\Delta+V$ with 
a (suitable) electric potential $V$.
Unfortunately, there exists no general approach which allows to derive a two-term asymptotics 
like \eqref{eq:widom} for 
$H$. Mathematical proofs are restricted to special examples or classes of examples. 
The exact determination of the coefficient in the subleading term beyond bounds poses a particularly challenging task.

First, we describe two situations in which -- in contrast to \eqref{eq:widom} -- the subleading term does not 
exhibit a logarithmic enhancement. Such a behaviour is generally referred to as an \emph{area law} and 
caused an enormous attraction in the physics literature over several decades until now, 
see e.g.\ Refs.~\onlinecite{RevModPhys.82.277, Laflorencie:2016kg}, and references therein. 
An area law is typically expected if $H$ has a mobility gap in its spectrum and 
if the Fermi energy falls inside the mobility gap. 
It was proved for discrete random Schr\"odinger operators, a 
Fermi energy lying in the region of complete localisation and for test functions obeying some smoothness assumption,
including the ones for entanglement entropies in \eqref{eq:h} below.%
\cite{PhysRevLett.113.150404, ElgartPasturShcherbina2016, MR3744386}
An area law also shows up in the Szeg\H{o} asymptotics for rather general test functions when $H$ equals the Landau 
Hamiltonian in two dimensions with a perpendicular constant magnetic field and if the Fermi energy $E$ 
coincides with one of the Landau levels,\cite{LeSoSp21} see also Ref.~\onlinecite{CharlesEst20}. 
Very recently, Ref.~\onlinecite{Pfeiffer21} established the stability of this area law under suitable magnetic and 
electric perturbations and for the test functions \eqref{eq:h} corresponding to entanglement entropies. 

Now, we return to situations where the subleading term exhibits a logarithmic enhancement 
as in \eqref{eq:widom}. One-dimensional Schr\"odinger operators $H= -\Delta + V$ with an 
arbitrarily often differentiable periodic potential $V$ were studied in 
Ref.~\onlinecite{PfirschSobolev18}. 
If the Fermi energy lies in the interior of a Bloch band, 
Pfirsch and Sobolev\cite{PfirschSobolev18} establish an enhanced area law for the same class of test functions as considered 
in \eqref{eq:widom}. Surprisingly, the coefficient of the subleading term of the 
Szeg\H{o} asymptotics remains the same as for $H_{0}=-\Delta$ in \eqref{eq:widom}. 
However, as the integrated density of states of the periodic Schr\"odinger operator $H$ differs from that of $H_{0}$ 
in general, this affects the leading-order term of the asymptotics. 
In contrast to the area law in the two-dimensional Landau model,\cite{LeSoSp21} a logarithmic enhancement occurs when this model is considered in three space dimensions due to the free motion in the direction parallel to the magnetic field.\cite{PfSp22} 
Another situation was studied by the present authors in Ref.~\onlinecite{MuSc20}. There, $H_{0}$ is perturbed 
by a compactly supported and bounded potential. The enhanced area law for $H_{0}$ is then proven to persist for 
the test function $h=h_{1}$ from \eqref{eq:h}, which corresponds to the von Neumann entropy. 
Yet the bounds in Ref.~\onlinecite{MuSc20}  are not good enough as to allow for a conclusion concerning the coefficient. 

The purpose of this paper is to show that the second-order Szeg\H{o} asymptotics \eqref{eq:widom} remains valid 
with the same coefficients if $H_{0}$ is replaced by $H=-\Delta+V$ with a compactly supported and bounded 
potential $V\in L^\infty(\R^d)$. This improves the result in Ref.~\onlinecite{MuSc20} in two ways: 
(i) the statement is strengthened  as to cover also universality of the coefficients and 
(ii) it is extended from the von Neumann entropy to a fairly large class of test functions.
The general approach we follow here is different from that in Ref.~\onlinecite{MuSc20}. 
It combines the traditional way \cite{LaWi80} for proving \eqref{eq:widom}, see also e.g.\ 
Refs.~\onlinecite{LeschkeSobolevSpitzer14, PfirschSobolev18}, with improved estimates from 
Ref.~\onlinecite{MuSc20}.

Our assumptions on the spatial domain coincide with those in Ref.~\onlinecite{MuSc20}.
\begin{ass}
	\label{ass:l}
	We consider a bounded Borel set $\Lambda\subset\R^{d}$ such that 
	\begin{enumerate}
 	\item[(i)]  it is a Lipschitz domain with, if $d\ge 2$, a piecewise $C^{1}$-boundary, 
	\item[(ii)] the origin $0\in \R^{d}$ is an interior point of $\Lambda$.
	\end{enumerate}
\end{ass}

\begin{remark}
	Assumption~\ref{ass:l}(i) is taken from Ref.~\onlinecite{LeschkeSobolevSpitzer14} 
	and guarantees the validity of
	\eqref{eq:widom},  see also Cond.\ 3.1 in Ref.~\onlinecite{LeschkeSobolevSpitzer17} for the notion of 
	a Lipschitz domain. Assumption~\ref{ass:l}(ii) 
	can always be achieved by a translation of the potential $V$ in Theorem~\ref{thm:main}.
\end{remark}

We specify the set of test functions to which our main result applies.

\begin{definition}
	\label{HH-def}
	For $d\in\N\setminus\{1\}$ we set
	\begin{multline*}
 	 	\HH_{d} := 
		\Big\{ h : [0,1] \rightarrow\C \text{~ piecewise continuous, $h(0) =0$ 
				and $\exists\, \alpha > d^{-1}$ such that} \\[-1ex] 
				\text{$h(\lambda) = O(\lambda^{\alpha})$ and
				$h(1) - h(1-\lambda) = O(\lambda^{\alpha})$ as $\lambda \searrow 0$} \Big\}.
	\end{multline*}
	In $d=1$ space dimension we require test functions to have an additional mirror symmetry 
	at $\lambda =1/2$ and to vanish	faster at zero (and one) than linear times a logarithm
	\begin{multline*}
 	 	\HH_{1} := 
		\Big\{ h : [0,1] \rightarrow\C \text{~ piecewise continuous, $h(0) =0$, $h = h(1- \boldsymbol\cdot)$ 
				and} \\[-1ex]
				\text{$h(\lambda) = o(\lambda\ln\lambda)$ as $\lambda \searrow 0$} \Big\}.
	\end{multline*}
\end{definition}

The main result of this paper is about the first two terms in the asymptotics \eqref{eq:main-form}.
These terms coincide with the ones in \eqref{eq:widom} 
and, thus, do not depend on the potential $V$.

\begin{theorem}
	\label{lem:themoregeneralthing}
	Let $d\in\N$ and let $\Lambda\subset\R^{d}$ be as in Assumption~{\upshape\ref{ass:l}}. 
	Let $V\in L^\infty(\R^d)$ be a compactly supported potential and let $h\in \HH_{d}$ be a test function. 
	Then, for every Fermi energy $E>0$ we obtain
	\begin{equation}
		\label{eq:main-form}
	 	\tr\big\{h\big(1_{\Lambda_L}1_{<E}(H)1_{\Lambda_L}\big)\big\} 
		= N_{0}(E) h(1)|\Lambda|L^{d} + \Sigma_{0}(E) I(h) |\partial\Lambda| L^{d-1} \ln L 
			+ {o}\big(L^{d-1} \ln L\big) 
	\end{equation}	 
	as $L\to\infty$.
\end{theorem}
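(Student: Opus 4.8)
Since \eqref{eq:widom} is already available for the free operator $H_{0}$ and applies verbatim to every $h\in\HH_{d}$ — the defining conditions of $\HH_{d}$ imply those behind \eqref{eq:widom} with some H\"older exponent $>0$, in $d=1$ because $o(\lambda\ln\lambda)$ forces $O(\lambda^{\alpha})$ for all $\alpha\in{}]0,1[\,$ and because $h=h(1-\boldsymbol\cdot)$ turns the endpoint behaviour at $1$ into the one at $0$ — the plan is to show that the potential affects neither of the first two terms. Writing $P:=1_{<E}(H)$, $P_{0}:=1_{<E}(H_{0})$, $T_{L}:=1_{\Lambda_{L}}P1_{\Lambda_{L}}$ and $T_{L}^{(0)}:=1_{\Lambda_{L}}P_{0}1_{\Lambda_{L}}$, this amounts to
\[
	\mathcal{D}_{L}(h):=\tr\{h(T_{L})\}-\tr\{h(T_{L}^{(0)})\}=o\big(L^{d-1}\ln L\big)\qquad\text{as }L\to\infty .
\]
I would first split $h(\lambda)=h(1)\lambda+h_{0}(\lambda)$ with $h_{0}(0)=h_{0}(1)=0$, so that $\mathcal{D}_{L}(h)=h(1)\big(\tr T_{L}-\tr T_{L}^{(0)}\big)+\mathcal{D}_{L}(h_{0})$. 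For the first summand one has $\tr T_{L}-\tr T_{L}^{(0)}=\int_{\Lambda_{L}}\big(P-P_{0}\big)(x,x)\,\d x$, which I would bound by $O(1)=o(L^{d-1}\ln L)$ using that the truncated difference $1_{\Lambda_{L}}(P-P_{0})1_{\Lambda_{L}}$ has trace norm bounded uniformly in $L$ — a consequence of resolvent expansions and stationary scattering theory exploiting that $V$ has compact support (equivalently: a compactly supported potential leaves the integrated density of states unchanged), and the kind of estimate imported and sharpened from Ref.~\onlinecite{MuSc20}. The substance of the theorem is then $\mathcal{D}_{L}(h_{0})=o(L^{d-1}\ln L)$ for $h_{0}$ vanishing at both endpoints.

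For polynomial $h_{0}$ this follows from a purely algebraic reduction: using $1_{\Lambda_{L}}^{2}=1_{\Lambda_{L}}$, $P^{2}=P$ and cyclicity, $\tr\{h_{0}(T_{L})\}$ is a finite combination of traces that, after the standard Landau--Widom manipulations, are expressed through the Hilbert--Schmidt operator $1_{\Lambda_{L}}P1_{\Lambda_{L}^{\mathrm c}}$ (note $\tr\{T_{L}-T_{L}^{2}\}=\|1_{\Lambda_{L}}P1_{\Lambda_{L}^{\mathrm c}}\|_{2}^{2}$); telescoping each term in $P\leftrightarrow P_{0}$ then bounds $\mathcal{D}_{L}(h_{0})$ by products of $\|1_{\Lambda_{L}}(P-P_{0})1_{\Lambda_{L}}\|_{1}$ or $\|1_{\Lambda_{L}}(P-P_{0})1_{\Lambda_{L}^{\mathrm c}}\|_{2}$ (both $O(1)$, again from the compact support of $V$) with the a priori bound $\|1_{\Lambda_{L}}P1_{\Lambda_{L}^{\mathrm c}}\|_{2}^{2}=\tr\{T_{L}(1-T_{L})\}=O(L^{d-1}\ln L)$, the latter obtained by feeding $h_{0}(\lambda)=\lambda(1-\lambda)$ into the polynomial case together with \eqref{eq:widom} at $H_{0}$. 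Hence $\mathcal{D}_{L}(p)=O(1)$ for every polynomial $p$ with $p(0)=0$. To reach a general $h_{0}\in\HH_{d}$ I would approximate it by polynomials $p$ vanishing at $0$ and $1$ and control $\mathcal{D}_{L}(h_{0}-p)$ through $\sum_{j}|(h_{0}-p)(\mu_{j})|$ and $\sum_{j}|(h_{0}-p)(\mu_{j}^{(0)})|$, where $\{\mu_{j}\}$, $\{\mu_{j}^{(0)}\}$ are the eigenvalues of $T_{L}$, $T_{L}^{(0)}$: the second sum is handled by \eqref{eq:widom} (whose bulk coefficient vanishes since $(h_{0}-p)(1)=0$), and the first by an a priori bound on the eigenvalue-counting function of $T_{L}$ near its only accumulation point $0$ (and correspondingly near $1$), itself obtained by comparison with the known distribution for $T_{L}^{(0)}$. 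It is exactly the strength of this a priori bound that forces $\alpha>1/d$ when $d\ge 2$ (ensuring in addition that $\tr\{h(T_{L})\}$ is well defined, in view of the Schatten-class behaviour of $1_{\Lambda_{L}}1_{<E}(H)$); letting $L\to\infty$ and then refining the approximation finishes the proof.

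The step I expect to be the main obstacle is the trace- and Hilbert--Schmidt-norm control of $1_{\Lambda_{L}}(P-P_{0})1_{\Lambda_{L}}$ (and $1_{\Lambda_{L}}(P-P_{0})1_{\Lambda_{L}^{\mathrm c}}$) to the required precision: because $1_{<E}$ is not smooth, $P-P_{0}=1_{<E}(H)-1_{<E}(H_{0})$ is not reached by the standard Helffer--Sj\"ostrand trace-class calculus, so one has to split $1_{<E}$ into a smooth part — whose operator difference is genuinely trace class and spatially localised near $\operatorname{supp}V$, treated by resolvent expansions — and a bump supported near the Fermi energy $E$, which lies in the absolutely continuous spectrum of both $H$ and $H_{0}$ and carries the slowly ($|x-y|^{-(d+1)/2}$) decaying long-range correlations responsible for the $\ln L$ enhancement, and then show that the truncated difference of this spectrally localised part is still $o(L^{d-1}\ln L)$. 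A secondary difficulty is the case $d=1$, where the subleading term grows only like $\ln L$, so the error room is far tighter: this is where the extra hypotheses $h=h(1-\boldsymbol\cdot)$ and $h(\lambda)=o(\lambda\ln\lambda)$ become genuinely necessary, the symmetry absorbing an otherwise surviving $O(\ln L)$ boundary correction.
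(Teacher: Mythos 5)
Your skeleton is the same as the paper's: reduce to $\tr\{h(T_L)\}-\tr\{h(T_L^{(0)})\}=o(L^{d-1}\ln L)$, peel off the linear part $h(1)\,\mathrm{id}$, treat polynomials by telescoping in the Hilbert--Schmidt operator $Q_L=1_{\Lambda_L^{c}}P1_{\Lambda_L}$, and close up by approximation. But two of your quantitative claims are wrong, and the step you leave as a black box is precisely the paper's main new technical input. First, you assert that $\|1_{\Lambda_L}(P-P_0)1_{\Lambda_L}\|_1$ is bounded uniformly in $L$. The paper proves only $|\tr\{P_L-P_{L,0}\}|=O(\ln L)$ (Lemma~\ref{lem:P-diff-1norm}), by summing the \emph{diagonal} blocks $\|1_{\Gamma_n}(P-P_0)1_{\Gamma_n}\|_1\le c_1/(2|n|^{d})$, and the remark following that lemma explicitly records that even boundedness of the trace is open; summing the off-diagonal blocks of \eqref{eq:MS2.25-1} to control the full trace norm gives a bound that grows with $L$, not $O(1)$. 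This overclaim is harmless only because $O(\ln L)=o(L^{d-1}\ln L)$ for $d\ge 2$ and because the symmetry forces $h(1)=0$ when $d=1$. Second, your conclusion $\mathcal{D}_L(p)=O(1)$ for polynomials does not follow from your own ingredients: the telescoped cross terms are controlled by $\|Q_L-Q_{L,0}\|_2\,\|Q_{L,0}\|_2=O(1)\cdot O\big(L^{(d-1)/2}(\ln L)^{1/2}\big)$, so the honest bound is $O(L^{(d-1)/2}\ln L)$. That still suffices for $d\ge2$, but the bookkeeping is decisive in $d=1$: there the non-symmetric polynomials $\lambda[\lambda(1-\lambda)]^{n}$ can only be bounded by $O(\ln L)$, which is \emph{not} $o(\ln L)$ --- this, and not merely the linear term, is why $\HH_1$ carries the symmetry constraint (see the paper's Remark~\ref{rem:1Dfail}).

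The genuine gap is in your closure step. To control $\sum_j|(h_0-p)(\mu_j)|$ over the eigenvalues of $T_L$ accumulating at $0$ you invoke ``an a priori bound on the eigenvalue-counting function of $T_L$ \ldots obtained by comparison with the known distribution for $T_L^{(0)}$'', without saying how the comparison is made. Near $0$ this amounts to bounding $\|Q_L\|_{2\alpha'}^{2\alpha'}$ for $\alpha'$ slightly above $1/d$, i.e.\ a von Neumann--Schatten \emph{quasi}-norm with exponent below $2$, and passing from $Q_L$ to $Q_{L,0}$ requires an estimate of $\|Q_L-Q_{L,0}\|_{2s}^{2s}$ for $s\in{}]1/d,1]$ that beats $L^{d-1}$. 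The uniform Hilbert--Schmidt bound \eqref{eq:HSbounded} says nothing about these smaller exponents; the missing ingredient is exactly the paper's Lemma~\ref{lem:interpolation} and Corollary~\ref{cor:interpolation}, obtained by interpolating between a crude volume bound $\|Q_L-Q_{L,0}\|_{p_0}^{p_0}=O(L^{d})$ for small $p_0$ and \eqref{eq:HSbounded}. Without an estimate of this kind your approximation step does not close, and it is also the place where the threshold $\alpha>1/d$ is actually used. Finally, your proposed splitting of $1_{<E}$ into a smooth part plus a spectral bump at $E$ is a different, unverified route to the localisation estimates; the paper instead imports explicit resolvent-kernel decay from Ref.~\onlinecite{MuSc20} and sums over unit cubes, which is what produces \eqref{eq:MS2.25} and \eqref{eq:MS2.25-1}.
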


\begin{remarks}
\item 
	The restriction to symmetric test functions in $\HH_{1}$ is technical. 
	It relates to the incommodious fact that the trace of a sequence of operators may converge whereas 
	convergence in trace norm need not hold. We refer to Remark~\ref{rem:1Dfail} for more details.
\item
	The class of test functions for which \eqref{eq:widom} holds requires less regularity for $h$ near the endpoints of the interval $[0,1]$ as compared to $\HH_{d}$. We expect that Theorem~\ref{lem:themoregeneralthing} extends to this more general class, i.e.\ to test functions $h$ with arbitrarily small ``H\"older exponents'' $\alpha$. 
	 Possibly, such an extension requires additional smoothness of the potential $V$. 
\item 
	It is only for the sake of simplicity that we confined ourselves to compactly supported and bounded potentials. 
	The theorem should remain valid for potentials with sufficient integrability properties. 
\end{remarks}

We conclude this section with an application of Theorem~\ref{lem:themoregeneralthing} to entanglement entropies for
the ground state of a system of non-interacting fermions with single-particle Hamiltonian $H$.
We introduce the one-parameter family of test functions 
$h_{\alpha}:\,[0,1]\rightarrow[0,1]$, given by 
\begin{equation}
	\label{eq:h} 
	\lambda\mapsto h_\alpha(\lambda):= \begin{cases} 
		(1-\alpha)^{-1}\log_{2}[\lambda^\alpha+(1-\lambda)^\alpha], 
			& \text{if } \;\alpha\in{}]0,\infty[\,\setminus\{1\}, \\
		-\lambda\log_2 \lambda - (1-\lambda)\log_2(1-\lambda), 
			&  \text{if } \;\alpha=1, 
	\end{cases}
\end{equation}
and refer to them as R\'enyi entropy functions. The particular case $h_{1}$ is also called von Neumann entropy function, for which
we use the convention $0 \log_{2}0 :=0$ for the binary logarithm. Following Ref.~\onlinecite{Klich06}, the quantity 
\begin{equation}\label{eq:EE}
	S_{\alpha}(H,E,\Omega) :=\tr\big\{h_{\alpha}\big(1_{\Omega}1_{<E}(H)1_{\Omega}\big)\big\},
\end{equation}
$\alpha>0$, defines the R\'enyi-entanglement entropy with respect to a spatial bipartition for the ground state 
of a quasi-free Fermi gas characterised by the 
single-particle Hamiltonian $H$ and Fermi energy $E$. Here, $\Omega \subset\R^{d}$ is any bounded Borel set.
It is obivous from the definition of the space of test functions that 
\begin{equation}
	h_{\alpha} \in \HH_{d} \quad\text{if and only if} \quad  d > \alpha^{-1} .
\end{equation}
Therefore, Theorem~\ref{lem:themoregeneralthing} has the following immediate

\begin{corollary}\label{thm:main}
	Let $d\in\N$ and let $\Lambda\subset\R^{d}$ be as in Assumption~{\upshape\ref{ass:l}}. Let
	$V\in L^\infty(\R^d)$ be a compactly supported potential. We fix a Fermi energy $E>0$ and a R\'enyi index 
	$\alpha > d^{-1}$.
	Then, the R\'enyi entanglement entropy exhibits the enhanced area law
	\begin{equation}
		\label{eq:entropy-asy}
		\lim_{L\rightarrow\infty}\frac{S_{\alpha}(H,E,\Lambda_L)}{L^{d-1}\ln L}
		= \Sigma_{0}(E) I(h_{\alpha}) |\partial\Lambda|,
	\end{equation}
	the limit being independent of $V$.
\end{corollary}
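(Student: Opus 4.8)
The plan is to deduce the corollary directly from Theorem~\ref{lem:themoregeneralthing}, specialised to the R\'enyi entropy function $h=h_{\alpha}$; the only things that need checking are that $h_\alpha$ lies in the admissible class $\HH_d$ and that $h_\alpha(1)=0$, the latter being what makes the volume term in \eqref{eq:main-form} disappear so that the enhanced area law shows up at leading order.

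First I would record the elementary endpoint asymptotics of $h_\alpha$. From \eqref{eq:h} one reads off that $h_\alpha$ is continuous on $[0,1]$, that $h_\alpha(0)=0$, and that $h_\alpha=h_\alpha(1-\boldsymbol\cdot)$, so it suffices to control $\lambda\searrow0$. Expanding the binary logarithm gives $h_\alpha(\lambda)=O(\lambda^{\alpha})$ for $\alpha\in{}]0,1[$, $h_\alpha(\lambda)=O(\lambda)$ for $\alpha\in{}]1,\infty[$, and $h_1(\lambda)=-\lambda\log_2\lambda+O(\lambda)$, whence $h_1(\lambda)=O(\lambda^{\beta})$ for every $\beta<1$; the mirror symmetry transfers the same bound to $h_\alpha(1)-h_\alpha(1-\lambda)=-h_\alpha(\lambda)$. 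For $d\ge2$ I may then choose a H\"older exponent in ${}]d^{-1},\min\{\alpha,1\}[$ (any exponent below $1$ if $\alpha\ge1$), an interval that is non-empty precisely because $\alpha>d^{-1}$; for $d=1$ the hypothesis forces $\alpha>1$, and $O(\lambda)=o(\lambda\ln\lambda)$ together with the symmetry $h_\alpha=h_\alpha(1-\boldsymbol\cdot)$ puts $h_\alpha$ into $\HH_1$. Thus $h_\alpha\in\HH_d$, in accordance with the equivalence recorded just above the corollary.

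Next I would evaluate $h_\alpha$ at $\lambda=1$: from \eqref{eq:h}, $h_\alpha(1)=(1-\alpha)^{-1}\log_2[1+0]=0$ for $\alpha\ne1$, while $h_1(1)=-1\cdot\log_2 1-0\cdot\log_2 0=0$ with the convention $0\log_2 0:=0$. Hence the term $N_0(E)h_\alpha(1)|\Lambda|L^d$ in \eqref{eq:main-form} vanishes identically, and Theorem~\ref{lem:themoregeneralthing} applied to $h=h_\alpha$, together with the definition \eqref{eq:EE} of $S_\alpha$, gives
\[
 S_\alpha(H,E,\Lambda_L)=\tr\big\{h_\alpha\big(1_{\Lambda_L}1_{<E}(H)1_{\Lambda_L}\big)\big\}=\Sigma_0(E)\,I(h_\alpha)\,|\partial\Lambda|\,L^{d-1}\ln L+o\big(L^{d-1}\ln L\big)
\]
as $L\to\infty$; dividing by $L^{d-1}\ln L$ and passing to the limit yields \eqref{eq:entropy-asy}. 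The right-hand side involves $V$ nowhere --- $\Sigma_0(E)$ is the explicit constant \eqref{eq:sigma0}, $I(h_\alpha)$ the integral \eqref{eq:i-of-h} of the fixed function $h_\alpha$, and $|\partial\Lambda|$ is purely geometric --- so the limit is independent of the potential. There is no genuine obstacle at this level: all the analytic work is carried by Theorem~\ref{lem:themoregeneralthing}, which I am free to invoke; the only points deserving an explicit line are the identity $h_\alpha(1)=0$ and, in one dimension, the verification that $h_\alpha$ with $\alpha>1$ really meets the sharper decay requirement $o(\lambda\ln\lambda)$ built into $\HH_1$.
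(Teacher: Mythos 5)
Your proposal is correct and follows exactly the route the paper takes: the paper records that $h_{\alpha}\in\HH_{d}$ if and only if $d>\alpha^{-1}$ and then declares the corollary an immediate consequence of Theorem~\ref{lem:themoregeneralthing}, which is precisely your argument with the membership check and the identity $h_{\alpha}(1)=0$ spelled out. Your explicit verification of the endpoint asymptotics (including the $d=1$ case, where $\alpha>1$ gives $O(\lambda)=o(\lambda\ln\lambda)$) is accurate and merely makes the paper's ``obvious'' step explicit.
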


\begin{remark}
 	Again, it would be desirable to remove the restriction $\alpha > d^{-1}$ in the corollary.
\end{remark}

%
\section{Proof of Theorem~\ref{lem:themoregeneralthing}}
%

Theorem~\ref{lem:themoregeneralthing} follows from \eqref{eq:widom}, which was proven in
Ref.~\onlinecite{LeschkeSobolevSpitzer14}, and 

\begin{theorem}
	\label{thm:diff}
	Let $d\in\N$ and let $\Lambda\subset\R^{d}$ be as in Assumption~{\upshape\ref{ass:l}}. 
	Let $V\in L^\infty(\R^d)$ be a compactly supported potential and let $h\in \HH_{d}$ be a test function. 
	Then, for every Fermi energy $E>0$ we have
	\begin{equation}
		\label{eq:widom-diff}
	 	\tr\Big\{h\big(1_{\Lambda_L}1_{<E}(H)1_{\Lambda_L}\big) - h\big(1_{\Lambda_L}1_{<E}(H_{0}) 1_{\Lambda_L}\big)\Big\} 
		= {o}(L^{d-1} \ln L) \qquad\text{as $L\to\infty$.}
	\end{equation}  
\end{theorem}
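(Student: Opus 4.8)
The plan is to reduce the trace difference in \eqref{eq:widom-diff} to a statement about the difference of the two Fermi projections, and then to exploit the fact that $V$ is compactly supported to show this difference is, in a suitable sense, localised near a fixed spatial region. Write $D_L := 1_{\Lambda_L}1_{<E}(H)1_{\Lambda_L}$ and $D_L^0 := 1_{\Lambda_L}1_{<E}(H_0)1_{\Lambda_L}$; both are bounded self-adjoint operators with spectrum in $[0,1]$. First I would use the quasi-analytic (Helffer--Sjöstrand) or a polynomial/Chebyshev approximation argument to reduce $h$ to a more tractable class. Since $h\in\HH_d$ vanishes like $\lambda^\alpha$ near $0$ and $h(1)-h(1-\lambda)$ vanishes like $\lambda^\alpha$ near $1$ with $\alpha>1/d$, one can approximate $h$ uniformly on $[0,1]$ by functions for which $h(t)-h(1)t$ (the combination that appears in $I(h)$) is controlled; the endpoint behaviour is exactly what is needed to trade powers of the ``small eigenvalue'' estimates for trace-class bounds. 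In $d=1$ the extra symmetry $h=h(1-\cdot)$ and the $o(\lambda\ln\lambda)$ decay enter here, because only for symmetric $h$ can one pass from convergence of traces to convergence in trace norm (cf. the remark referencing Remark~\ref{rem:1Dfail}).

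Next I would establish the key operator-theoretic input: the difference $1_{<E}(H)-1_{<E}(H_0)$, sandwiched by $1_{\Lambda_L}$, is well-approximated by something supported near $\operatorname{supp}V$. Because $V$ is bounded and compactly supported, the resolvent difference $(H-z)^{-1}-(H_0-z)^{-1} = -(H-z)^{-1}V(H_0-z)^{-1}$ is a nice (e.g. Hilbert--Schmidt, and after iteration trace-class) operator with good decay away from $\operatorname{supp}V$; by the Helffer--Sjöstrand formula this transfers to $1_{<E}(H)-1_{<E}(H_0)$ (the indicator $1_{<E}$ can be handled by standard smoothing near $E>0$, which lies in the a.c. spectrum of both operators). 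Combined with Combes--Thomas-type exponential decay of resolvents, one gets that $1_{\Lambda_L}\big(1_{<E}(H)-1_{<E}(H_0)\big)1_{\Lambda_L}$ differs from a fixed trace-class operator by an error that is negligible (in fact $O(L^{-\infty})$ or at least $o(L^{d-1}\ln L)$ in trace norm). This is where the improved estimates from Ref.~\onlinecite{MuSc20} would be invoked.

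Then I would telescope the difference $h(D_L)-h(D_L^0)$. Using the integral representation $h(D_L)-h(D_L^0)=\frac{1}{2\pi\i}\oint \widetilde h(z)\big[(D_L-z)^{-1}-(D_L^0-z)^{-1}\big]\,\d z$ (or a Duhamel/Chebyshev expansion), one writes $(D_L-z)^{-1}-(D_L^0-z)^{-1}=(D_L-z)^{-1}(D_L^0-D_L)(D_L^0-z)^{-1}$, so everything is controlled by $\|D_L-D_L^0\|_{1}$ together with resolvent bounds that are uniform in $L$ away from $\{0,1\}$. The contributions of the spectral edges $0$ and $1$, where the resolvent bounds blow up, must be treated separately using the endpoint decay of $h$ and the a priori estimates on the number of eigenvalues of $D_L$ (resp. $1-D_L$) below a threshold $\varepsilon$ — these grow like $\varepsilon^{-\text{const}}L^{d-1}\ln L$ at worst, and the condition $\alpha>1/d$ is precisely what makes the resulting bound $o(L^{d-1}\ln L)$ after optimising in $\varepsilon$.

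The main obstacle I expect is the edge analysis: near $\lambda=0$ and $\lambda=1$ the function $h$ is merely Hölder, not smooth, so $h(D_L)$ and $h(D_L^0)$ cannot simply be compared via a smooth functional calculus, and the trace-norm of $D_L-D_L^0$ alone is not enough — one needs a quantitative handle on how the low-lying (and high-lying, near $1$) spectrum of the truncated projection behaves and on the fact that $D_L-D_L^0$ does not create too much extra spectral weight near the edges. Controlling this requires combining the compact-support localisation of $V$ with the known eigenvalue-counting bounds for $1_{\Lambda_L}1_{<E}(H_0)1_{\Lambda_L}$ near its spectral edges (which themselves come from the Szegő machinery), and it is here that the hypothesis $h\in\HH_d$, rather than merely the hypotheses behind \eqref{eq:widom}, is genuinely used — explaining Remark~(ii) after Theorem~\ref{lem:themoregeneralthing} about the possibility of relaxing $\alpha$ at the cost of smoothness of $V$.
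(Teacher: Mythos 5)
The central quantitative claim of your proposal is false, and the argument collapses without it. You assert that $1_{\Lambda_L}\big(1_{<E}(H)-1_{<E}(H_0)\big)1_{\Lambda_L}$ differs from a \emph{fixed trace-class operator} localised near $\operatorname{supp}V$ by an error that is negligible \emph{in trace norm}, and you then control the functional-calculus telescoping by $\|D_L-D_L^0\|_1$. But $E>0$ lies in the absolutely continuous spectrum of both $H$ and $H_0$, so the Combes--Thomas/resolvent decay rate $|\Im\sqrt z|$ degenerates as $z$ approaches the real axis at $E$; after reconstructing the sharp spectral projections, the kernel of $1_{<E}(H)-1_{<E}(H_0)$ decays only \emph{polynomially}, like $(|n||m|)^{-(d-1)/2}(|n|+|m|)^{-1}$ (this is exactly \eqref{eq:MS2.25}). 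Consequently even the Hilbert--Schmidt norm of the truncated difference \emph{grows}, $\|P_L-P_{L,0}\|_2 = O(\sqrt{\ln L})$ (Lemma~\ref{lem:P-diff-2norm}), there is no trace-norm convergence, and $\|P_L-P_{L,0}\|_1$ is never bounded — the paper only controls the \emph{trace} $\tr\{P_L-P_{L,0}\}=O(\ln L)$ (Lemma~\ref{lem:P-diff-1norm}), which is a much weaker statement, and even remarks that boundedness of this trace is only conjectural for sharp indicators. So the input your resolvent-telescoping step requires simply does not exist.

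The paper's proof avoids any trace-norm control of $P_L-P_{L,0}$. The key structural device is the off-diagonal truncation $Q_{L(,0)} = 1_{\Lambda_L^c}1_{<E}(H_{(0)})1_{\Lambda_L}$, for which $s_1(P_{L(,0)})=\lambda(1-\lambda)\big|_{\lambda=P_{L(,0)}}=|Q_{L(,0)}|^2$. One first proves the claim for the polynomial basis $s_n=[\lambda(1-\lambda)]^n$, $a_n=\lambda s_n$ of $\HH_{d,0}$ by telescoping \emph{in the variables} $|Q_L|^2$, $|Q_{L,0}|^2$: the difference $\|Q_L-Q_{L,0}\|_2$ is \emph{uniformly bounded in $L$} (Eq.~\eqref{eq:HSbounded}, from the compact support of $V$), while $\|Q_{L,0}\|_2^2=\tr s_1(P_{L,0})=O(L^{d-1}\ln L)$ is supplied by the unperturbed asymptotics \eqref{eq:widom} itself; the cross term $\|Q_L-Q_{L,0}\|_2\,\|Q_{L,0}\|_2=O(L^{(d-1)/2}(\ln L)^{1/2})$ is then $o(L^{d-1}\ln L)$. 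Closure to general $h\in\HH_{d,0}$ proceeds via Stone--Weierstra\ss{} for $h=s_1 f$, then an $\varepsilon$-cutoff near the spectral edges controlled by Schatten quasi-norm bounds $\|Q_L-Q_{L,0}\|_{2s}^{2s}=o(L^{d-1})$ for $s>1/d$ (Corollary~\ref{cor:interpolation}) — this is where $\alpha>1/d$ enters, roughly as you anticipated — and finally a sandwiching argument for piecewise continuous $h$. Your reduction to $h(1)=0$, the role of the $d=1$ symmetry, and the flavour of the edge analysis are all in the right spirit, but the quantitative mechanism that makes the error $o(L^{d-1}\ln L)$ is the interplay between a \emph{bounded} off-diagonal difference and a \emph{logarithmically growing} unperturbed off-diagonal norm, not trace-norm locality of the projection difference.
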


\noindent
The following notion will be useful in the proof of Theorem~\ref{thm:diff}. 

\begin{definition}
 	For every $d\in\N$ we set 
	\begin{equation}
		\HH_{d,0} := \{h \in \HH_{d}: h(1)=0\}.
 	\end{equation}
	We note that $\HH_{1,0} = \HH_{1}$ due to the additional symmetry constraint in $d=1$.
\end{definition}


\begin{proof}[Proof of Theorem~{\upshape\ref{thm:diff}}]
	We fix $E>0$ and let $L>1$. 
 	For a given $h\in \HH_{d}$, we have $\tilde{h}:= h -h(1) \,\mathrm{id} \in \HH_{d,0}$. 
	Throughout this paper, we use the notation $H_{(0)}$ as a placeholder for either $H$ or $H_{0}$ 
	and similarly for other quantities like 
	\begin{equation}
 		P_{L (, 0)} := 1_{\Lambda_{L}} 1_{<E}(H_{(0)})1_{\Lambda_{L}}.
	\end{equation}
	By definition of $\tilde h$, we observe
	\begin{equation}
 		h(P_{L}) - h(P_{L,0}) = \tilde h(P_{L}) - \tilde h(P_{L,0}) + h(1) (P_{L}- P_{L,0}).
	\end{equation}
	We recall that if $d=1$ then $h(1)=0$ by symmetry.
	Thus, Lemma~\ref{lem:P-diff-1norm} below implies that the proof of the theorem is reduced to 
	showing the claim for test functions from $\HH_{d,0}$ only. 
	This will be accomplished in four steps for test functions of increasing generality.
	
	\noindent\emph{Step \textup{(i)}. $h\in \HH_{d,0}$ is a polynomial.} \\
	We define even and odd polynomials $s_{n}$ and $a_{n}$, $n\in\N$, on $[0,1]$ by 
	\begin{equation}
		\label{eq:polys}
 		s_{n}(\lambda) := [\lambda(1-\lambda)]^{n} \quad \text{and} \quad
		a_{n}(\lambda) := \lambda s_{n}(\lambda), \quad \lambda\in [0,1],
	\end{equation}
	and note that the family $\{s_{n}, a_{n}: n\in \N\}$ constitutes a basis of the space of polynomials in $\HH_{d,0}$ for $d \ge 2$ because the linear spans
		\begin{equation}
			\label{eq:span-equal}
 			\mathrm{span}\{s_{n}, a_{n}: n\in \N\} = \mathrm{span}\{s_{1}\mkern1mu\mathrm{id}^{\,k}: k\in \N_{0}\}
		\end{equation}
		coincide. For $d=1$, only the symmetric polynomials are to be considered for the basis due to the symmetry constraint. Therefore, the claim of Step (i) amounts to proving 
	\begin{align}
 		\tr\big\{ s_{n}(P_{L}) - s_{n}(P_{L,0})\big\} &= {o}(L^{d-1} \ln L) \label{eq:sym-pol} \\
		\intertext{and, if $d\ge 2$, also } 
		\tr\big\{ a_{n}(P_{L}) - a_{n}(P_{L,0})\big\} &= {o}(L^{d-1} \ln L) \label{eq:asym-pol}
	\end{align}	
	as $L\to\infty$ for every $n\in \N$. 

	We turn to \eqref{eq:sym-pol} first and observe that $s_{1}(P_{L(, 0)}) = |Q_{L(, 0)}|^{2}$ with
	\begin{equation}
		\label{eq:Qdef}
		Q_{L(, 0)} := 1_{\Lambda_L^{c}}1_{<E}(H_{(0)})	1_{\Lambda_L},
	\end{equation}
	where $|A|^2:=A^\ast A$ for any bounded operator $A$, 
	and the superscript $^{c}$ indicates the complement of a set. 
	The telescope-sum identity
	\begin{equation}
 		a^{n} = b^{n} + \sum_{j=1}^{n} a^{n-j}(a-b) b^{j-1}
	\end{equation}
	for $a,b\in\R$ and $n\in\N$ does not use the commutativity of $a$ and $b$ and, thus, implies
	\begin{equation}
		\label{eq:sn-start}
 		s_{n}(P_{L}) = \big(|Q_{L}|^{2}\big)^{n} 
		= s_{n}(P_{L,0}) + \sum_{j=1}^{n} |Q_{L}|^{2(n-j)} \big( |Q_{L}|^{2} - |Q_{L,0}|^{2}\big) |Q_{L,0}|^{2(j-1)}.
	\end{equation}
	Hence, we obtain 
	\begin{equation}
		\label{eq:sn-middle}
 		\big| \tr\big\{ s_{n}(P_{L}) - s_{n}(P_{L,0}) \big\} \big| \le n \big\| |Q_{L}|^{2} - |Q_{L,0}|^{2} \big\|_{1},
	\end{equation}
	where $\|\boldsymbol\cdot\|_{p}$, $p>0$, denotes the von Neumann--Schatten (quasi-) norm.
	We infer from the identity 
	\begin{equation}
 		|Q_{L}|^{2} - |Q_{L,0}|^{2} = (Q_{L}^{*} - Q_{L,0}^{*})(Q_{L} - Q_{L,0}) + (Q_{L}^{*} - Q_{L,0}^{*})Q_{L,0}
			+ Q_{L,0}^{*}(Q_{L} - Q_{L,0})
	\end{equation}
	and \eqref{eq:sn-middle} that 
	\begin{equation}
		\label{eq:sn-ende}
 		\big| \tr\big\{ s_{n}(P_{L}) - s_{n}(P_{L,0}) \big\} \big| \le n \big( \|Q_{L} - Q_{L,0}\|_{2}^{2}
			+ 2 \|Q_{L} - Q_{L,0}\|_{2} \|Q_{L,0}\|_{2}\big) =: n  \phi(L).
	\end{equation}
	The boundedness of Hilbert--Schmidt norms
	\begin{equation}
		\label{eq:HSbounded}
 		\sup_{L>1} \|Q_{L} - Q_{L,0} \|_{2} < \infty,
	\end{equation}
	see Lemma~2.3 in Ref.~\onlinecite{MuSc20},
	and the asymptotics \eqref{eq:widom} applied to the test function $s_{1}$,
	\begin{equation}
 		\|Q_{L,0}\|_{2} = [ \tr s_{1}(P_{L,0}) ]^{1/2} = O\big( L^{(d-1)/2} (\ln L)^{1/2}\big) \quad
		\text{as } L\to\infty,
	\end{equation}
	yield 
	\begin{equation}
		\label{eq:phi-L}
 		\phi(L) = O\big( L^{(d-1)/2} (\ln L)^{1/2}\big) \quad	\text{as } L\to\infty.
	\end{equation}
	Therefore, \eqref{eq:sym-pol} follows from \eqref{eq:sn-ende} and \eqref{eq:phi-L}.

	Now, we turn to the proof of \eqref{eq:asym-pol} and equate, using \eqref{eq:sn-start},
	\begin{equation}
 		a_{n}(P_{L}) = a_{n}(P_{L,0}) + (P_{L} - P_{L,0}) s_{n}(P_{L,0}) 
			+ P_{L} \sum_{j=1}^{n} |Q_{L}|^{2(n-j)} \big( |Q_{L}|^{2} - |Q_{L,0}|^{2}\big) |Q_{L,0}|^{2(j-1)}.
	\end{equation}
	This leads to the estimate
	\begin{align}
 		\big| \tr\big\{ a_{n}(P_{L}) - a_{n}(P_{L,0}) \big\} \big| 
		&\le \big|\tr\big\{(P_{L} - P_{L,0}) s_{n}(P_{L,0})\big\}\big|	+ n \phi(L)  \notag\\
		&\le \|P_{L} - P_{L,0}\|_{2} \,\|s_{n}(P_{L,0})\|_{2} + n \phi(L).  \label{eq:norm-est-bad}
	\end{align}
	The first factor of the first term in the second line of \eqref{eq:norm-est-bad} is of the order $O(\sqrt{\ln L})$ as 
	$L\to\infty$ according to Lemma~\ref{lem:P-diff-2norm} below. The second factor of the first term
	behaves like 
	\begin{equation}
 		\|s_{n}(P_{L,0})\|_{2} 
		= \big(\tr\{ s_{2n}(P_{L,0}) \}\big)^{1/2} 
		= O\big( L^{(d-1)/2} (\ln L)^{1/2}\big) \qquad	\text{as } L\to\infty,
	\end{equation}
	according to \eqref{eq:widom} applied to the test function $s_{2n}$.
	Together with \eqref{eq:phi-L} we conclude from \eqref{eq:norm-est-bad} that 
	\begin{equation}
 		\big| \tr\big\{ a_{n}(P_{L}) - a_{n}(P_{L,0}) \big\} \big| = O(L^{(d-1)/2} \ln L)
		\qquad	\text{as } L\to\infty,
	\end{equation}
	which proves \eqref{eq:asym-pol} for $d\ge 2$.

	Steps~(ii) -- (iv) establish the ``closure'' of the asymptotics.
	
	\noindent\emph{Step \textup{(ii)}. $h\in \HH_{d,0}$ is of the form $h=s_{1}f$ with $f\in C([0,1])$.} \\
	Let $\varepsilon>0$. The Stone--Weierstra\ss{} theorem guarantees the existence of a polynomial 
	$\zeta$ over $[0,1]$ such that
	\begin{equation}
		\label{eq:StoneW}
		\sup_{\lambda\in [0,1]}|f(\lambda) - \zeta(\lambda) | 	\le \varepsilon .
	\end{equation}
	We estimate with the triangle inequality 
	\begin{align}
 		\big|\tr \big\{ (s_{1}f)(P_{L}) - (s_{1}f)(P_{L,0})\big\} \big|
		& \le \big\|\big(s_{1}(f-\zeta)\big)(P_{L})\big\|_{1} 
			+ \big\|\big(s_{1}(\zeta -f)\big)(P_{L,0})\big\|_{1} 	\notag\\		
		& \quad 	+ \big|\tr \big\{ (s_{1}\zeta)(P_{L}) - (s_{1}\zeta)(P_{L,0})\big\} \big| \notag \\
		& \le \varepsilon \big( \|s_{1}(P_{L})\|_{1} + \|s_{1}(P_{L,0})\|_{1} \big) + o(L^{d-1}\ln L)
	\end{align}
	as $L\to\infty$, where the last estimate uses H\"older's inequality, \eqref{eq:StoneW} and the result of 
	Step~(i) for the polynomial $s_{1}\zeta \in \HH_{d,0}$. 
	We observe $ \|s_{1}(P_{L(,0)})\|_{1} = \tr s_{1}(P_{L(,0)})$ and conclude with \eqref{eq:widom} plus 
	another application of Step~(i) to the difference $ \tr s_{1}(P_{L}) -  \tr s_{1}(P_{L,0})$ that 
	\begin{equation}
 		\lim_{L\to\infty} \frac{\big|\tr \big\{ (s_{1}f)(P_{L}) - (s_{1}f)(P_{L,0})\big\} \big|}{L^{d-1}\ln L}
		\le 2 \varepsilon \Sigma_{0}(E)I(s_{1})|\partial\Lambda|.
	\end{equation}
	As $\varepsilon>0$ is arbitrary, the claim follows.

	\noindent\emph{Step \textup{(iii)}. $h\in \HH_{d,0}$ is continuous.} \\
	Boundedness of $h$ and the growth conditions at $0$ and $1$ provide the existence of a constant $C>0$ 
	and of a function $g: [0,1/4] \rightarrow [0, \infty[\,$ with $g(0)=0$ and $\lim_{x\searrow 0}g(x)=0$
	such that 
	\begin{equation}
 		|h| \le \LEFTRIGHT\{.{\begin{array}{cl} C s_{1}^{\alpha}, & \text{if \; $d \ge 2$ \; with \; $\alpha > 1/d$}\\ 
						- s_{1} \log_{2}(s_{1}) \, g(s_{1}), & \text{if \; $d =1$}. \end{array}}
	\end{equation}
	For $\varepsilon \in \,]0,1/2[\,$ arbitrary but fixed, 
	we consider a non-negative switch function $\tilde\psi_{\varepsilon} \in C^{\infty}([0,1])$ with 
	$0 \le\tilde\psi_{\varepsilon} \le 1$ 
	and with restrictions
	\begin{equation}
 		\tilde\psi_{\varepsilon}\big|_{[0,\varepsilon/2]} = 1 \qquad\text{and}\qquad
		\tilde\psi_{\varepsilon}\big|_{[\varepsilon, 1]} = 0.
	\end{equation}
	We set $\psi_{\varepsilon} := \tilde\psi_{\varepsilon}\comp s_{1}$. 
	Since $(1-\psi_{\varepsilon})h$ is of the form of the previous Step~(ii), we conclude
	\begin{align}
 		\frac{\big|\tr \big\{ h(P_{L}) - h(P_{L,0})\big\} \big|}{L^{d-1}\ln L}
		& \le \frac{\big|\tr \big\{ (\psi_{\varepsilon}h)(P_{L}) - (\psi_{\varepsilon}h)(P_{L,0})\big\} 
					\big|}{L^{d-1}\ln L} + o (1) \notag\\
	 	&	\le \frac{\|(\psi_{\varepsilon}h)(P_{L}) \|_{1}}{L^{d-1}\ln L} + \Sigma_{0}(E) 
			I(|\psi_{\varepsilon}h|) 	|\partial\Lambda| +o(1)
		\label{eq:psi-eps-diff}
	\end{align}
	as $L\to\infty$,
	where we used the Szeg\H{o} asymptotics \eqref{eq:widom} for the unperturbed operator $P_{L,0}$.

	First, we consider the case $d\ge 2$. Let $\delta \in \, ]0, \alpha - 1/d[\,$ so that 
	$\alpha':=\alpha - \delta > 1/d$.	We observe that 
	\begin{equation}
		\label{eq:psi-h-bound}
		|\psi_{\varepsilon}h| \le C s_{1}^{\alpha} \psi_{\varepsilon} \le C \varepsilon^{\delta} s_{1}^{\alpha'},
	\end{equation}	 
	whence
	\begin{equation}
		\label{eq:psi-h-bound2}		
 		\|(\psi_{\varepsilon}h)(P_{L})\|_{1}	\le C \varepsilon^{\delta} 
			\|Q_{L}\|_{2\alpha'}^{2\alpha'} 
		\le C_{\alpha'} \varepsilon^{\delta} \Big(\|Q_{L,0}\|_{2\alpha'}^{2\alpha'} 
			+ \|Q_{L}- Q_{L,0}\|_{2\alpha'}^{2\alpha'} \Big). 
	\end{equation}
	Here, $C_{\alpha'} \ge C$ is a constant needed to cover the case $2\alpha' >1$. The difference term satisfies
	$\|Q_{L}- Q_{L,0}\|_{2\alpha'}^{2\alpha'} = o(L^{d-1})$ as $L\to\infty$ by Corollary~\ref{cor:interpolation} if 
	$\alpha' \in \, ]1/d,1]$. For the remaining case $\alpha' >1$, we refer to the general von Neumann--Schatten-norm
	estimate $\|\boldsymbol\cdot\|_{2\alpha'} \le \|\boldsymbol\cdot\|_{2}$ and \eqref{eq:HSbounded}. 
	Thus, we infer from \eqref{eq:psi-eps-diff} that
	\begin{align}
 		\lim_{L\to\infty}\frac{\big|\tr \big\{ h(P_{L}) - h(P_{L,0})\big\} \big|}{L^{d-1}\ln L}
		& \le \Sigma_{0}(E) I(|\psi_{\varepsilon}h|) |\partial\Lambda|  
				+ C_{\alpha'} \varepsilon^{\delta} 	\!\lim_{L\to\infty}
				\frac{\tr \{ s_{1}^{\alpha'}(P_{L,0})\}}{L^{d-1}\ln L}	\notag\\
		& = \Sigma_{0}(E) |\partial\Lambda|  \, \big[  I(|\psi_{\varepsilon}h|) 
			+  C_{\alpha'} 	\varepsilon^{\delta} I(s_{1}^{\alpha'})	\big],
		\label{eq:psi-eps-diff2}
	\end{align}
	where we (ab-)used the notation \eqref{eq:polys} for the symmetric polynomials to 
	include also positive real exponents and 
	appealed to the unperturbed asymptotics \eqref{eq:widom} in the last step.
	The bound \eqref{eq:psi-h-bound} implies 
	$I(|\psi_{\varepsilon}h|) \le C \varepsilon^{\delta} I(s_{1}^{\alpha'}) < \infty$
	so that the claim follows from \eqref{eq:psi-eps-diff2} and the fact that $\varepsilon>0$ 
	can be chosen arbitrarily small.
	
	It remains to treat the case $d=1$. In this case, \eqref{eq:psi-h-bound} and \eqref{eq:psi-h-bound2} 
	are to be replaced by 
	\begin{equation}
	\label{eq:psi-h-bound-d1} 		
	|\psi_{\varepsilon}h| \le  - s_{1} \log_{2}(s_{1}) g(s_{1}) \psi_{\varepsilon} 
		\le - s_{1} \log_{2}(s_{1})  \psi_{\varepsilon} G(\varepsilon),		
	\end{equation}
	where $G(\varepsilon) := \sup_{x\in [0,\varepsilon]} g(x)$, and 
	\begin{equation}
 		\|(\psi_{\varepsilon}h)(P_{L})\|_{1}	
		\le G(\varepsilon) \, \big\| |Q_{L}|^{2} \log_{2}(|Q_{L}|^{2}) \big\|_{1} 
		= G(\varepsilon) \tr\big\{f(|Q_{L}|)\big\},
	\end{equation}
	where $f:[0,\infty[\, \rightarrow [0,1]$, $x \mapsto  1_{[0,1]}(x) (-x^{2}) \log_{2}(x^{2})$. 
 	The ``Proof of the upper bound in Theorem~1.3'' in Ref.~\onlinecite{MuSc20} demonstrates
	\begin{equation}
		\label{eq:MS-ub}
 		\tr\big\{f(|Q_{L}|)\big\} = O(\ln L) \qquad \text{as } L\to\infty,
	\end{equation}
	see Eq.\ (2.49) in Ref.~\onlinecite{MuSc20}.
	Thus, we conclude with \eqref{eq:psi-eps-diff} and \eqref{eq:psi-h-bound-d1} -- \eqref{eq:MS-ub} 
	that there exists a constant $c>0$ such that
	\begin{align}
 		\lim_{L\to\infty}\frac{\big|\tr \big\{ h(P_{L}) - h(P_{L,0})\big\} \big|}{\ln L}
		& \le c G(\varepsilon) + \Sigma_{0}(E) I(|\psi_{\varepsilon}h|) |\partial\Lambda|  \notag\\
		& \le G(\varepsilon) \big[ c + \Sigma_{0}(E) I(s_{1}|\log_{2}s_{1}|) |\partial\Lambda| \big]  .
		\label{eq:psi-eps-diff1}
	\end{align}
	The claim follows	because of $I(s_{1}|\log_{2}s_{1}|) < \infty$ and 
	$\lim_{\varepsilon\downarrow 0} G(\varepsilon) =0$ , which is consequence of the right-continuity of $g$.

	\noindent\emph{Step \textup{(iv)}. $h\in \HH_{d,0}$.} \\
	We follow the argument in Ref.~\onlinecite{PfirschSobolev18} and assume -- without restriction -- that $h$ is real-valued.
	Otherwise we decompose $h$ into its real part $\Re h$ and imaginary part $\Im h$.
	
	Since $h$ is piecewise continuous and continuous at $0$ and at $1$ there exists $\delta>0$ such that 
	$h$ is continuous in $[0,2\delta] \cup [1-2\delta,1]$. Let $\varepsilon>0$. We choose
	continuous functions $h_{1}, h_{2} \in \HH_{d,0} \cap C([0,1])$ such that 
	\begin{enumerate}
 	\item[(1)] $h$, $h_{1}$ and $h_{2}$ coincide on $[0,\delta] \cup [1-\delta,1]$,
	\item[(2)] $h_{1} \le h \le h_{2}$,
	\item[(3)] $\int_{0}^{1} \d\lambda\, | h_{2}(\lambda) - h_{1}(\lambda)| < \varepsilon$.
	\end{enumerate}
	The monotonicity (2) and an application of Step~(iii) to $h_{1}$, respectively $h_{2}$, imply
	\begin{equation}
 	 \frac{\tr \big\{ h_{1}(P_{L,0}) - h(P_{L,0})\big\}}{L^{d-1}\ln L}	+ o(1) 
	 \le \frac{\tr \big\{ h(P_{L}) - h(P_{L,0})\big\}}{L^{d-1}\ln L}  
	 \le \frac{\tr \big\{ h_{2}(P_{L,0}) - h(P_{L,0})\big\}}{L^{d-1}\ln L}	+ o(1)
	\end{equation}
	as $L\to\infty$.
	 The unperturbed asymptotics \eqref{eq:widom} for $h_{1} - h$, respectively $h_{2}- h$, thus gives
	\begin{equation}
		\label{eq:iv-end}
 	 \Sigma_{0}(E)I(h_{1}-h)|\partial\Lambda|	 \le \lim_{L\to\infty}\frac{\tr \big\{ h(P_{L}) - h(P_{L,0})\big\}}{L^{d-1}\ln L} \le
	 \Sigma_{0}(E)I(h_{2}-h)|\partial\Lambda|.	
	\end{equation}
	By (1) and (3) we estimate $I(|h_{2}-h_{1}|) \le \eta\varepsilon$ with $\eta:=1/[4\pi^{2}\delta(1-\delta)]$. Combining this with the monotonicity (2), we infer
	\begin{equation}
 		-\eta\varepsilon \le I(h_{1}-h) \qquad \text{and}\qquad I(h_{2}-h) \le \eta\varepsilon.
	\end{equation}
	Since $\varepsilon>0$ is arbitrary, the claim follows from \eqref{eq:iv-end}.
\end{proof}

\begin{lemma}
 	\label{lem:P-diff-2norm}
	Assume the hypotheses of Theorem~{\upshape\ref{thm:diff}}. Then we have 
	\begin{equation}
 		\|P_{L} - P_{L,0} \|_{2} = O(\sqrt{\ln L}) \qquad \text{as } L \to \infty.
	\end{equation}
\end{lemma}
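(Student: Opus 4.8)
The plan is to deduce the bound from the unperturbed Szeg\H{o} asymptotics together with a decay estimate for the integral kernel of the difference of the two Fermi projections. Write $D := 1_{<E}(H) - 1_{<E}(H_{0})$, so that $P_{L} - P_{L,0} = 1_{\Lambda_{L}} D\, 1_{\Lambda_{L}}$. Since $\Lambda$ is bounded, $\Lambda_{L} \subseteq B_{cL}$ with $c := \sup_{x \in \Lambda}|x| < \infty$ and $B_{r} := \{x \in \R^{d} : |x| < r\}$. The operators $1_{<E}(H_{(0)}) 1_{B_{r}}$ are Hilbert--Schmidt for every $r > 0$ (from the heat-kernel bound $\e^{-H_{(0)}}(x,x) \le \e^{\|V\|_{\infty}}(4\pi)^{-d/2}$ and $1_{<E}(H_{(0)}) \le \e^{E}\e^{-H_{(0)}}$), hence so is $D 1_{B_{r}}$; using $D = D^{*}$ and $0 \le 1_{\Lambda_{L}} \le 1_{B_{cL}}$ we get
\[
  \|P_{L} - P_{L,0}\|_{2} \le \|1_{\Lambda_{L}} D\|_{2} = \|D 1_{\Lambda_{L}}\|_{2} \le \|D 1_{B_{cL}}\|_{2},
\]
so it suffices to show $\|D 1_{B_{r}}\|_{2} = O(\sqrt{\ln r})$ as $r \to \infty$. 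By self-adjointness one has $\|D 1_{B_{r}}\|_{2}^{2} = \tr(1_{B_{r}} D^{2} 1_{B_{r}}) = \int_{B_{r}} \big( \int_{\R^{d}} |D(x,y)|^{2}\,\d y \big) \d x$, and $\int_{B_{r}}(1+|x|)^{-d}\,\d x = O(\ln r)$; so the whole matter reduces to the decay estimate $\int_{\R^{d}} |D(x,y)|^{2}\,\d y \le C (1+|x|)^{-d}$, uniformly in $x \in \R^{d}$. (The bound $\sup_{L > 1}\|Q_{L} - Q_{L,0}\|_{2} < \infty$ already used in \eqref{eq:HSbounded} shows that the logarithm genuinely comes from the two-sided truncation onto $\Lambda_{L}$, but it plays no role here.)

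For the kernel estimate I would start from Stone's formula. As $H_{0} \ge 0$ has no eigenvalues and $H = -\Delta + V$ has no positive eigenvalues for compactly supported $V \in L^{\infty}$ (unique continuation) and only finitely many non-positive ones,
\[
  D = K + \frac{1}{\pi} \int_{0}^{E} \Im\big[ R(\lambda + \i 0) - R_{0}(\lambda + \i 0) \big]\,\d\lambda,
\]
where $R_{(0)}(z) := (H_{(0)} - z)^{-1}$ and $K$ is the finite-rank spectral projection of $H$ onto its (non-positive) eigenvalues below $E$; being finite rank, $K$ contributes only $O(1)$ to $\|D 1_{B_{r}}\|_{2}$. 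Into the integral one inserts the resolvent identity $R(\lambda + \i 0) - R_{0}(\lambda + \i 0) = - R_{0}(\lambda + \i 0)\, V\, R(\lambda + \i 0)$, valid for $\lambda \in {}]0,E]$ by the limiting absorption principle (the threshold $\lambda = 0$ contributing only an integrable singularity), together with $V = 1_{B_{R}}V$ where $\operatorname{supp} V \subseteq B_{R}$. Feeding in the outgoing large-distance asymptotics of the free resolvent, $R_{0}(\lambda + \i 0)(x,u) \sim c_{d}\,\lambda^{(d-3)/4}|x-u|^{-(d-1)/2}\,\e^{\i\sqrt{\lambda}\,|x-u|}$, and the analogous (Jost-type) behaviour of $R(\lambda + \i 0)(u,y)$ as $|y| \to \infty$, the $\lambda$-integration turns — after the substitution $k = \sqrt{\lambda} \in {}]0,\sqrt{E}]$ — into a non-stationary-phase integral in $k$ with phase $k(|x-u| + |u-y|)$; integration by parts shows that its only contributions come from the endpoints $k = \sqrt{E}$ (the Fermi momentum) and $k = 0$ (the threshold, where the amplitude vanishes to sufficient order, respectively the remaining singularity is integrable), each of size $O\big((1 + |x-u| + |u-y|)^{-1}\big)$. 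Since $u$ ranges over the bounded set $B_{R}$, the length $|x-u| + |u-y|$ is comparable to $1 + |x| + |y|$ once $|x|$ is large, whence
\[
  |D(x,y)| \le C\,(1+|x|)^{-(d-1)/2}(1+|y|)^{-(d-1)/2}(1+|x|+|y|)^{-1}
\]
for $|x|$ large, uniformly in $y$. Splitting the $y$-integral into $|y| \le |x|$ and $|y| > |x|$ gives $\int_{\R^{d}}|D(x,y)|^{2}\,\d y \le C(1+|x|)^{-d}$ for those $x$; for bounded $|x|$ the left-hand side is $\le 2\,1_{<E}(H)(x,x) + 2\,1_{<E}(H_{0})(x,x)$, which is bounded because Fermi-projection diagonals are. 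This yields the required decay estimate in every dimension $d \ge 1$. The oscillatory analysis just sketched — in particular the control of $R(\lambda + \i 0)$ uniformly down to the threshold, and the (delicate) case $d = 1$, where $-\Delta$ carries a zero-energy resonance — is carried out in Ref.~\onlinecite{MuSc20}, and the kernel bound may be quoted from there.

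The hard part is exactly the extra factor $(1 + |x| + |y|)^{-1}$: without it the $y$-integral would diverge, because $(1+|x|)^{-(d-1)/2}(1+|y|)^{-(d-1)/2}$ alone is not square-integrable in $y$. Producing this factor forces one to exploit the oscillation $\e^{\i\sqrt{\lambda}|x-y|}$ of the resolvent boundary values under the energy integral — the very ``dynamical delocalisation'' mechanism responsible for the logarithmic enhancement in \eqref{eq:widom} — in tandem with limiting-absorption bounds that stay uniform up to the spectral threshold; this is precisely where the improved estimates of Ref.~\onlinecite{MuSc20} are needed, whereas the reduction of the lemma to such a kernel bound is elementary.
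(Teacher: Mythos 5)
Your proposal is correct in substance and rests on the same quantitative input as the paper, namely the off-diagonal decay bound of the form $(|x|\,|y|)^{-(d-1)/2}(|x|+|y|)^{-1}$ for the difference of the two Fermi projections taken from Ref.~\onlinecite{MuSc20}; but you package it differently. You drop the left truncation via $\|1_{\Lambda_L}D1_{\Lambda_L}\|_2\le\|D1_{B_{cL}}\|_2$ and integrate out the free variable over all of $\R^d$ first, reducing everything to $\int_{B_{cL}}(1+|x|)^{-d}\,\d x=O(\ln L)$, whereas the paper keeps both truncations, decomposes into unit cubes, and evaluates the double sum $\sum_{n,m}(|n||m|)^{-(d-1)}(|n|+|m|)^{-2}\approx\int_\ell^{aL}\int_\ell^{aL}(x+y)^{-2}\,\d x\,\d y=O(\ln L)$; the two bookkeepings are equivalent and yield the same logarithm. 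Three caveats. First, Ref.~\onlinecite{MuSc20} does not supply the \emph{pointwise} kernel bound you state, only its cube-averaged Hilbert--Schmidt version, Eq.~\eqref{eq:MS2.25} here; your argument survives verbatim if you replace $\int_{\R^d}|D(x,y)|^2\,\d y$ by $\|D1_{\Gamma_n}\|_2^2=\sum_m\|1_{\Gamma_m}D1_{\Gamma_n}\|_2^2\lesssim|n|^{-d}$, but you should quote the estimate in the form it is actually proven. Second, \eqref{eq:MS2.25} is only valid for $n,m$ outside a fixed box $[-\ell,\ell]^d$, so the claimed uniformity ``in $y$'' glosses over the near-origin region; this is harmless (the a-priori bound \eqref{eq:pNorm-bound} shows the excluded cubes contribute $O(1)$ in total, which is how the paper disposes of them), but it does need a sentence. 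Third, your heuristic re-derivation of the decay via Stone's formula, real-axis limiting absorption and non-stationary phase is not how Ref.~\onlinecite{MuSc20} proceeds: there the factor $(|n|+|m|)^{-1}$ is produced by a complex contour representation of the projection difference, with the contour's distance to the real axis coupled to $|n|+|m|$ through the exponential resolvent decay $\e^{-|\Im\sqrt z|\,|n|/2}$, thereby avoiding the threshold and the $d=1$ resonance issues you flag. Since you ultimately defer to the cited bound rather than to your sketch, the proof stands, but as written the sketch attributes to Ref.~\onlinecite{MuSc20} an analysis it does not contain.
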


\begin{proof}
	Let $\Gamma_{x} := x + [0,1]^{d} \subset\R^{d}$ denote the closed unit cube translated by $x\in\R^{d}$.
	By Eq.~(2.25) in Ref.~\onlinecite{MuSc20}, there exist $c_{2} \equiv c_{2}(E,V)> 0$ and $\ell \equiv \ell (E,V) >0$ such that 
	\begin{equation}
		\label{eq:MS2.25}
 		\big\|1_{\Gamma_{n}}\big( 1_{<E}(H) - 1_{<E}(H_{0})\big) 1_{\Gamma_{m}}\big\|_{2}
		\le \frac{c_{2}}{(|n||m|)^{(d-1)/2} (|n| + |m|)} 
	\end{equation}
	for all $n,m \in \Z^{d} \setminus [-\ell,\ell]^{d}$. In order to deal also with centres in the excluded cube 
	$[-\ell,\ell]^{d}$ we use the estimate
	\begin{equation}
 		\|A\|_{2}^{2} \le 3 \|1_\Omega A\|_{2}^{2} + \| 1_{\Omega^{c}} A 1_{\Omega^{c}} \|_{2}^{2}
	\end{equation}
	for any self-adjoint Hilbert--Schmidt operator $A$ and any measurable subset $\Omega \subset \R^{d}$, 
	together with the norm bound  
	\begin{equation}
		\label{eq:pNorm-bound}
	 	\sup_{x\in\R^{d}} \| 1_{\Gamma_{x}} 1_{<E}(H_{(0)})\|_{p} \le \gamma,
	\end{equation}
	which holds for any $p>0$, see Lemma~B.1 in Ref.~\onlinecite{DiGeMu19}. Here, $\gamma \equiv \gamma(E,p,V) > 0$ is a constant. 
	As to the applicability of Lemma~B.1 in Ref.~\onlinecite{DiGeMu19}, we remark that the result of that lemma is 
	formulated with two spatial indicator functions: one to the left and one to the right of the spectral projection.
	Yet, only one of them is needed in the proof. The other one drops out
	in Eq.\ (155) of Ref.~\onlinecite{DiGeMu19}.

	Combining \eqref{eq:MS2.25} -- \eqref{eq:pNorm-bound}, we infer the existence of constants 
	$C \equiv C(\ell, E, V)>0$, $C' \equiv C'(\ell, E, V)>0$ and $a \equiv a(\Lambda) >0$, 
	which are independent of $L$, such that 	
	\begin{align}
		\|P_{L} - P_{L,0} \|_{2}^{2} 
		& \le C + \sum_{\substack{n,m\in\Z^{d} \setminus [-\ell,\ell]^{d}: \\[.5ex] \Gamma_{n} 
			\cap \Lambda_{L} \neq \emptyset,		\, \Gamma_{m} \cap \Lambda_{L} \neq \emptyset}}  
			\frac{c_{2}^{2}}{(|n||m|)^{d-1} (|n| + |m|)^{2}} \notag\\
		& \le C' \int_{\ell}^{aL} \d x \int_{\ell}^{aL} \d y \; \frac{1}{(x+y)^{2}}
	 		= O (\ln L)
	\end{align}
	as $L\to \infty$.
\end{proof}

\begin{lemma}
 	\label{lem:P-diff-1norm}
	Assume the hypotheses of Theorem~{\upshape\ref{thm:diff}}. Then we have 
	\begin{equation}
 		\big|\tr \{ P_{L} - P_{L,0} \} \big| = O(\ln L) \qquad \text{as } L \to \infty.
	\end{equation}
\end{lemma}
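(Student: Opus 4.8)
The plan is to localise the trace to unit cubes, exploit a cancellation of the off-diagonal contributions, and sum up a trace-norm version of~\eqref{eq:MS2.25}. Write $\Pi := 1_{<E}(H)$, $\Pi_0 := 1_{<E}(H_0)$ and $W := \Pi - \Pi_0$, so that $P_L - P_{L,0} = 1_{\Lambda_L}W 1_{\Lambda_L}$. First I would record that $P_L$ and $P_{L,0}$ are separately trace class: by~\eqref{eq:pNorm-bound} the operator $1_{\Gamma_x}1_{<E}(H_{(0)})$ is Hilbert--Schmidt uniformly in $x$, and since only finitely many of the pairwise disjoint unit cubes $\{\Gamma_n\}_{n\in\Z^d}$ meet the bounded set $\Lambda_L$, the operator $1_{\Lambda_L}1_{<E}(H_{(0)})$ is Hilbert--Schmidt, whence $P_{L(,0)}$ is a product of two Hilbert--Schmidt operators. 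Setting $\chi_n := 1_{\Lambda_L \cap \Gamma_n}$, $J_L := \{ n \in \Z^d : \Gamma_n \cap \Lambda_L \neq \emptyset \}$ and using $1_{\Lambda_L} = \sum_{n \in J_L}\chi_n$, the difference becomes a \emph{finite} sum $\tr\{P_L - P_{L,0}\} = \sum_{n,m \in J_L}\tr\{\chi_n W \chi_m\}$.

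The key point is that only the diagonal blocks survive: $\tr\{\chi_n W \chi_m\} = 0$ for $n \neq m$. Indeed, using $\Pi^2 = \Pi$ and $\Pi_0^2 = \Pi_0$ one factorises $\chi_n W \chi_m = (\chi_n\Pi)(\Pi\chi_m) - (\chi_n\Pi_0)(\Pi_0\chi_m)$ into products of Hilbert--Schmidt operators (again by~\eqref{eq:pNorm-bound}); cyclicity of the trace for such products turns each term into $\tr\{\Pi_{(0)}\chi_m\chi_n\Pi_{(0)}\}$, which vanishes because $\chi_m\chi_n = 0$ for $n \neq m$. Consequently $\tr\{P_L - P_{L,0}\} = \sum_{n \in J_L}\tr\{\chi_n W \chi_n\}$, and since $\chi_n \le 1_{\Gamma_n}$ and compression by a projection does not increase the trace norm,
\begin{equation*}
 \big|\tr\{P_L - P_{L,0}\}\big| \;\le\; \sum_{n \in J_L}\big\|1_{\Gamma_n}W 1_{\Gamma_n}\big\|_1 .
\end{equation*}

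It remains to estimate the diagonal blocks and to sum. For the finitely many $n$ in the excluded cube $[-\ell,\ell]^d$ I would use the crude bound $\|1_{\Gamma_n}W 1_{\Gamma_n}\|_1 \le \|1_{\Gamma_n}\Pi\|_2^2 + \|1_{\Gamma_n}\Pi_0\|_2^2 \le 2\gamma^2$ from~\eqref{eq:pNorm-bound}, contributing a constant independent of $L$. For $|n|$ large I would invoke the trace-norm analogue of the local estimate~\eqref{eq:MS2.25} from Ref.~\onlinecite{MuSc20} (the bounds there are in fact proved in every von Neumann--Schatten (quasi-)norm, with a norm-dependent constant), which in the diagonal case $n = m$ reads $\|1_{\Gamma_n}W 1_{\Gamma_n}\|_1 \le c_1/|n|^d$ with $c_1$ independent of $L$. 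Summing exactly as in the proof of Lemma~\ref{lem:P-diff-2norm}, with $a \equiv a(\Lambda) > 0$ such that $\Lambda_L \subset B_{aL}$,
\begin{equation*}
 \sum_{n \in J_L}\big\|1_{\Gamma_n}W 1_{\Gamma_n}\big\|_1
 \;\le\; \mathrm{const} + c_1\!\!\sum_{\substack{n \in \Z^d \setminus [-\ell,\ell]^d : \\ \Gamma_n \cap \Lambda_L \neq \emptyset}}\frac{1}{|n|^d}
 \;\le\; \mathrm{const} + C\!\int_{\ell}^{aL}\frac{\d r}{r} \;=\; O(\ln L)
\end{equation*}
as $L \to \infty$, which is the assertion.

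The main obstacle is the diagonal trace-norm bound $\|1_{\Gamma_n}W 1_{\Gamma_n}\|_1 = O(|n|^{-d})$: the Hilbert--Schmidt estimate~\eqref{eq:MS2.25} by itself gives only $\|1_{\Gamma_n}W 1_{\Gamma_n}\|_2 = O(|n|^{-d})$, and upgrading this to the same rate in trace norm needs an $n$-uniform control on the number of non-negligible singular values of $1_{\Gamma_n}W 1_{\Gamma_n}$, which is not a consequence of the Hilbert--Schmidt bound alone but is built into the pseudodifferential machinery underlying Ref.~\onlinecite{MuSc20}. A second point worth stressing is that the off-diagonal cancellation is indispensable: the naive estimate $|\tr\{P_L - P_{L,0}\}| \le \sum_{n,m \in J_L}\|1_{\Gamma_n}W 1_{\Gamma_m}\|_1$, fed with~\eqref{eq:MS2.25}, would only produce a bound of order $L^d$.
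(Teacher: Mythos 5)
Your overall strategy is the same as the paper's: reduce the trace to the diagonal blocks $1_{\Gamma_n}\big(1_{<E}(H)-1_{<E}(H_{0})\big)1_{\Gamma_n}$, bound each block in trace norm by $c_{1}|n|^{-d}$, and sum to obtain $O(\ln L)$. Your diagonal reduction via the off-diagonal cancellation $\tr\{\chi_{n}W\chi_{m}\}=0$ for $n\neq m$ is correct and is equivalent to the paper's direct use of $\tr A=\sum_{n}\tr\{1_{\Gamma_{n}}A1_{\Gamma_{n}}\}$ for trace-class $A$; the treatment of the finitely many cubes meeting $[-\ell,\ell]^{d}$ via \eqref{eq:pNorm-bound} and the final summation are exactly as in the paper and in Lemma~\ref{lem:P-diff-2norm}.

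The gap is the step you yourself single out as the main obstacle: the trace-norm bound $\|1_{\Gamma_n}\big(1_{<E}(H)-1_{<E}(H_{0})\big)1_{\Gamma_n}\|_{1}\le c_{1}|n|^{-d}$. You propose to obtain it by citing Ref.~\onlinecite{MuSc20} with the remark that the bounds there are ``in fact proved in every von Neumann--Schatten (quasi-)norm''; this is not the case. The estimate \eqref{eq:MS2.25} in Ref.~\onlinecite{MuSc20} is derived from a $4$-norm bound on $|V|^{1/2}(H_{0}-z)^{-1}1_{\Gamma_{n}}$ (Lemma~2.1 there), which after the double resolvent identity yields a product of two $4$-norm factors and hence only a Hilbert--Schmidt bound on the localised resolvent difference. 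To get the trace-norm version \eqref{eq:MS2.25-1} one needs each factor in the $2$-norm, i.e.\ the estimate \eqref{eq:MS-Lemma2.1-2}; this is not contained in Ref.~\onlinecite{MuSc20} and has to be proved from the pointwise Green's-function bound, after which the resolvent-identity and contour arguments are rerun in trace norm (this is \eqref{eq:MS2.15-1} and the passage from it to \eqref{eq:MS2.25-1}). That occupies the second half of the paper's proof. The missing ingredient is elementary --- no pseudodifferential machinery is involved; it is the factorisation of the resolvent difference through $|V|^{1/2}$ on both sides that supplies the uniform control you correctly observe cannot come from the Hilbert--Schmidt bound alone --- but it is genuinely new relative to Ref.~\onlinecite{MuSc20} and cannot be discharged by citation. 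Once it is supplied, your argument is complete and coincides with the paper's.
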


\begin{proof}
	The argument is similar to the one for the previous lemma. First, we claim that the estimate \eqref{eq:MS2.25} 
	also holds if the Hilbert--Schmidt norm is replaced by the trace norm; only the constant $c_{2}$ changes: there exists $c_{1} \equiv c_{1}(E,V)> 0$ such that 
	\begin{equation}
		\label{eq:MS2.25-1}
 		\big\|1_{\Gamma_{n}}\big( 1_{<E}(H) - 1_{<E}(H_{0})\big) 1_{\Gamma_{m}}\big\|_{1}
		\le \frac{c_{1}}{(|n||m|)^{(d-1)/2} (|n| + |m|)} 
	\end{equation}
	for all $n,m \in \Z^{d} \setminus [-\ell,\ell]^{d}$, with the same $\ell >0$ as in \eqref{eq:MS2.25}. 
	We prove \eqref{eq:MS2.25-1} below. 
	Assuming its validity for the time being, we proceed with the proof of the lemma and estimate
	\begin{align}
 		\big|\tr \{ P_{L} - P_{L,0} \} \big| 
		& = \Bigg| \sum_{n\in\Z^{d}:\, \Gamma_{n} \cap \Lambda_{L} \neq\emptyset} 
				\tr\Big\{ 1_{\Gamma_{n}} 1_{\Lambda_{L}}  \big( 1_{<E}(H) - 1_{<E}(H_{0})\big) 1_{\Gamma_{n}} \Big\}\Bigg|
				\notag\\
		& \le \sum_{n\in\Z^{d}:\, \Gamma_{n} \cap \Lambda_{L} \neq\emptyset} 
				\Big\| 1_{\Gamma_{n}} 1_{\Lambda_{L}}  \big( 1_{<E}(H) - 1_{<E}(H_{0})\big) 1_{\Gamma_{n}} \Big\|_{1}
				\notag\\
		& \le C + \sum_{\substack{n\in\Z^{d} \setminus [-\ell,\ell]^{d}: \\[.5ex] \Gamma_{n} \cap \Lambda_{L} \neq \emptyset}}  \frac{c_{1}}{2|n|^{d}} = O (\ln L) \qquad\text{as $L\to \infty$.}
	\end{align}
	Here, $C \equiv C(E, V) >0$ is a constant, and we used \eqref{eq:pNorm-bound} and 
	\eqref{eq:MS2.25-1} for the second inequality.
	
	It remains to show \eqref{eq:MS2.25-1}. To this end, we recall the pointwise estimate from Eq.\ (2.7) in Ref.~\onlinecite{MuSc20}
	for the integral kernel $G_{0}(x,y;z)$ of the unperturbed resolvent $\frac{1}{H_{0} - z}$, $z\in\C\setminus\R$, 
	and apply it to
	\begin{align}
 		\Big\| |V|^{1/2} \frac{1}{H_{0} - z} 1_{\Gamma_{n}} \Big\|_{2}
		& = \bigg(\int_{\R^{d}} \d x \int_{\Gamma_{n}}\d y \, |V(x)| |G_{0}(x,y;z)|^{2} \bigg)^{1/2} \notag\\
		& \le C \bigg(\int_{\R^{d}}\d x\, |V(x)| \bigg)^{1/2} \, |z|^{(d-3)/4} \, 
					\frac{\e^{-|\Im\sqrt{z}||n|/2}}{|n|^{(d-1)/2}},
		\label{eq:MS-Lemma2.1-2}
	\end{align}
	where $n\in\Z^{d} \setminus [-\ell_{0}, \ell_{0}]^{d}$ for some suitable $\ell_{0} \equiv \ell_{0}(z,V)>0$, which is 
	given in Lemma~2.1 of Ref.~\onlinecite{MuSc20}, and $C>0$, which depends only on the dimension. 
	The bound \eqref{eq:MS-Lemma2.1-2} is analogous to the statement of 
	Lemma 2.1 in Ref.~\onlinecite{MuSc20}, except that \eqref{eq:MS-Lemma2.1-2} involves the $2$-norm instead of the $4$-norm. 
	A double application of the resolvent identity yields 
	\begin{multline}
		\label{eq:MS2.15-1}
 		\bigg\| 1_{\Gamma_{n}} \bigg( \frac{1}{H_{0} - z} - \frac{1}{H - z}\bigg)1_{\Gamma_{m} }\bigg\|_{1}
		\le \Big\|  1_{\Gamma_{n}} \frac{1}{H_{0} - z} |V|^{1/2} \Big\|_{2} \; 
		 	\Big\| |V|^{1/2} \frac{1}{H_{0} - z}  1_{\Gamma_{m}} \Big\|_{2} \\
		\times\bigg(1+ \Big\| |V|^{1/2} \frac{1}{H - z} |V|^{1/2} \Big\|\bigg).
	\end{multline}
	We observe that \eqref{eq:MS2.15-1} and \eqref{eq:MS-Lemma2.1-2} are fully analogous to 
	Eq.\ (2.15) and Lemma 2.1 in Ref.~\onlinecite{MuSc20}. 
	Thus, \eqref{eq:MS2.25-1} follows from \eqref{eq:MS2.15-1}
	in the very same way as Eq.\ (2.25)  in Ref.~\onlinecite{MuSc20} follows from 
	Eq.\ (2.15) in Ref.~\onlinecite{MuSc20} .
\end{proof}

\begin{remarks}
\item
	We expect that the trace in the claim of Lemma~\ref{lem:P-diff-1norm} remains bounded as $L\to\infty$. 
 	In fact, in their study of the spectral shift function, Kohmoto, Koma and Nakamura prove 
	the existence of the limit 
	\begin{equation}
 		\lim_{L\to\infty} \tr\big\{\vartheta_{L} \big( 1_{<E}(H) - 1_{<E}(H_{0}) \big) \vartheta_{L}\big\}
	\end{equation}
	in Thm.~5 of Ref.~\onlinecite{KoKoNa13}, where $\vartheta_{L} := \vartheta(L^{-1} \,\boldsymbol\cdot\,)$ is a smooth cut-off function determined by some 
	$\vartheta \in C_{c}^{\infty}(\R^{d})$ which obeys $\vartheta=1$ in a neighbourhood of the origin.
	Unfortunately, we cannot use this result in our study of the Szeg\H{o} asymptotics with ``sharp'' indicator functions because the volume of the region where $\vartheta_{L}$ decays from 1 to 0 is of the order $O(L^{d})$.
\item
	\label{rem:1Dfail}
 	Likewise, we believe that the trace on the r.h.s. of the first line of \eqref{eq:norm-est-bad} 
	remains bounded
	as $L\to\infty$. On the other hand, the trace norm $\|(P_{L} - P_{L,0}) s_{n}(P_{L,0})\|_{1}$ 
	may grow logarithmically in $L$ in some one-dimensional situations, which is correctly captured by the product of Hilbert--Schmidt norms in the second line of \eqref{eq:norm-est-bad}. Thus, it is the second inequality in \eqref{eq:norm-est-bad} which is responsible for the additional symmetry constraint in $\HH_{1}$, resp.\ $\HH_{1,0}$, as compared to higher dimensions.
\end{remarks}

The difference $Q_{L} - Q_{L,0}$ was previously estimated 
in Lemma 2.5 of Ref.~\onlinecite{MuSc20} in suitable von Neumann--Schatten norms 
$\|\boldsymbol\cdot\|_{s}$. The next lemma improves that result by accessing smaller values of $s$ and obtaining a weaker growth in $L$ as compared to Lemma 2.5 of Ref.~\onlinecite{MuSc20}.

\begin{lemma}
	\label{lem:interpolation}
	Let $\Lambda\subset\R^{d}$, $d\in\N$, be as in Assumption~{\upshape\ref{ass:l}(ii)}. 
	Let $V\in L^\infty(\R^d)$ have compact support and fix $E>0$ and  $s\in{}]0,1]$. 
	Then, for every $p_{0} \in {}]0,\min\{1,2s\}]$ there exists a constant $C\equiv C(\Lambda,V,E,p_{0})>0$ 
	such that for all $ L\ge 1$ we have
	\begin{equation}
		\label{eq:FermiProjDivs}
		\| Q_{L} - Q_{L,0} \|_{2s}^{2s}\le C L^{2d(1-s)/(2-p_{0})}.
	\end{equation}
\end{lemma}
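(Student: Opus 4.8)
The plan is to obtain \eqref{eq:FermiProjDivs} by interpolating the Schatten quasi-norm $\|\boldsymbol\cdot\|_{2s}$ between the ideal $\mathcal{S}_{p_{0}}$, where only a crude volume bound of order $L^{d}$ is available, and the Hilbert--Schmidt class $\mathcal{S}_{2}$, in which $Q_{L}-Q_{L,0}$ is bounded uniformly in $L$ by \eqref{eq:HSbounded}. Write $A:=Q_{L}-Q_{L,0}$. Since $0<p_{0}\le 2s\le 2$ by hypothesis, the Hölder inequality applied to the (non-negative) singular-value sequence of $A$, with the conjugate exponents $\tfrac{p_{0}}{2s(1-\theta)}$ and $\tfrac{2}{2s\theta}$ — both $\ge 1$ precisely because $0<p_{0}\le 2s\le 2$ — yields the log-convexity estimate
\[
 \|A\|_{2s}\le\|A\|_{p_{0}}^{1-\theta}\,\|A\|_{2}^{\theta},
 \qquad \frac{1}{2s}=\frac{1-\theta}{p_{0}}+\frac{\theta}{2},
\]
valid for quasi-norms as well. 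Solving for $\theta$ gives $1-\theta=\frac{p_{0}(1-s)}{s(2-p_{0})}\in[0,1]$, so that raising the displayed inequality to the power $2s$ produces precisely the exponent $\frac{2s(1-\theta)}{p_{0}}=\frac{2(1-s)}{2-p_{0}}$ of $\|A\|_{p_{0}}^{p_{0}}$ that appears in \eqref{eq:FermiProjDivs}, multiplied by the bounded factor $\|A\|_{2}^{2s\theta}$. The two extremal cases are harmless: for $s=1$ the target exponent of $L$ vanishes and the assertion reduces to \eqref{eq:HSbounded}, while for $p_{0}=2s$ one has $\theta=0$ and the argument uses the $\mathcal{S}_{p_{0}}$ bound alone.

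It then remains to prove the crude estimate $\|Q_{L}-Q_{L,0}\|_{p_{0}}^{p_{0}}\le C L^{d}$, uniformly in $L\ge 1$. Because $p_{0}\le 1$, the quasi-triangle inequality $\|B+C\|_{p_{0}}^{p_{0}}\le\|B\|_{p_{0}}^{p_{0}}+\|C\|_{p_{0}}^{p_{0}}$ reduces this to bounding $\|Q_{L}\|_{p_{0}}^{p_{0}}$ and $\|Q_{L,0}\|_{p_{0}}^{p_{0}}$ separately. By the ideal property, $\|Q_{L(,0)}\|_{p_{0}}\le\|1_{<E}(H_{(0)})1_{\Lambda_{L}}\|_{p_{0}}$; decomposing $1_{\Lambda_{L}}$ over the unit cubes $\Gamma_{n}$, $n\in\Z^{d}$, that $\Lambda_{L}$ meets and applying the quasi-triangle inequality once more followed by \eqref{eq:pNorm-bound}, one gets
\[
 \|1_{<E}(H_{(0)})1_{\Lambda_{L}}\|_{p_{0}}^{p_{0}}
 \le\!\!\sum_{n:\,\Gamma_{n}\cap\Lambda_{L}\neq\emptyset}\!\!
 \big\|1_{<E}(H_{(0)})1_{\Gamma_{n}}\big\|_{p_{0}}^{p_{0}}
 \le \#\{n:\Gamma_{n}\cap\Lambda_{L}\neq\emptyset\}\;\gamma^{p_{0}}= O(L^{d}),
\]
the cube count being $O(L^{d})$ uniformly for $L\ge1$ since $\Lambda$, hence $\Lambda_{L}=L\cdot\Lambda$, is bounded. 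This also certifies that $A\in\mathcal{S}_{p_{0}}$ for each fixed $L$, so the interpolation step above is legitimate. Finally, \eqref{eq:HSbounded} is stated only for $L>1$; but $\|A\|_{2}\le\|Q_{L}\|_{2}+\|Q_{L,0}\|_{2}$ together with the same cube decomposition (now for $p=2$) gives $\sup_{1\le L\le 2}\|A\|_{2}<\infty$, so the Hilbert--Schmidt bound is in fact uniform for all $L\ge 1$. Collecting the pieces yields \eqref{eq:FermiProjDivs} with a constant depending only on $\Lambda$, $V$, $E$ and $p_{0}$ (through $\gamma$, the implicit constant in the cube count, and $\theta$).

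I do not expect a genuine obstacle here: the estimate is a soft interpolation argument once the two inputs \eqref{eq:HSbounded} and \eqref{eq:pNorm-bound} are in hand, the former being the place where the improvement over Lemma~2.5 of Ref.~\onlinecite{MuSc20} enters. The points that merely require care are the validity of Hölder's inequality for the Schatten \emph{quasi}-norms with $p_{0}<1$ — handled above at the level of singular values, the exponents $\tfrac{p_{0}}{2s(1-\theta)}$ and $\tfrac{2}{2s\theta}$ being $\ge 1$ exactly because of the standing assumption $p_{0}\le 2s\le 2$ — and the bookkeeping that matches the power of $L$ and tracks the dependence of all constants on $\Lambda$, $V$, $E$ and $p_{0}$.
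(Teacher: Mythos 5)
Your proposal is correct and follows essentially the same route as the paper: a crude $O(L^{d})$ bound on $\|Q_{L}-Q_{L,0}\|_{p_{0}}^{p_{0}}$ obtained from the unit-cube covering of $\Lambda_{L}$ together with the a-priori bound \eqref{eq:pNorm-bound}, interpolated against the uniform Hilbert--Schmidt bound \eqref{eq:HSbounded} with exactly the same choice of $\theta$. The extra details you supply (the Hölder-on-singular-values justification of the interpolation inequality for quasi-norms and the treatment of $L\in[1,2]$) are correct and only make explicit what the paper leaves implicit.
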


\begin{proof}
 	We fix $E>0$, $s\in{}]0,1]$ and $p_{0} \in {}]0,\min\{1,2s\}]$. 
	Given any $L>1$, the spatial domain $\Lambda_{L} \subseteq \bigcup_{n \in\Xi_{L}} \Gamma_{n}$ can be covered 
	by finitely many unit cubes indexed by the set $\Xi_{L} \equiv \Xi_{L}(\Lambda) \subset \Z^{d}$. 
	The number of required cubes can be bounded  by $|\Xi_{L}| \le \varkappa L^{d}$ where the constant 
	$\varkappa \equiv \varkappa(\Lambda) >0$ does not depend on $L$. We therefore conclude from \eqref{eq:pNorm-bound}
	that
	\begin{equation}
		\label{eq:norm-pNull}
 		\| Q_{L} - Q_{L,0} \|_{p_{0}}^{p_{0}}
		\le 2 \gamma \varkappa L^{d}.
	\end{equation}
	We will apply the interpolation inequality 
	\begin{equation}
		\label{eq:interpol}
 		\|\boldsymbol\cdot\|_{p_{\theta}} \le \|\boldsymbol\cdot\|_{p_{0}}^{1-\theta}
			 \|\boldsymbol\cdot\|_{p_1}^{\theta} \qquad\text{with}\quad
		\frac{1}{p_{\theta}} = \frac{\theta}{p_{1}} + \frac{1-\theta}{p_{0}}
	\end{equation}
	to von Neumann--Schatten (quasi-) norms. It is valid for every $0 < p_{0} \le p_{1} < \infty$ and every $\theta \in [0,1]$.
	We choose $p_{\theta} := 2s$, $p_{1} := 2$ and determine 
	\begin{equation}
 		\theta = 1 - \frac{p_{0}}{s}\, \frac{1-s}{2-p_{0}}.
	\end{equation}
	Together with the boundedness of Hilbert--Schmidt norms \eqref{eq:HSbounded},
	we infer the claim from \eqref{eq:interpol} and \eqref{eq:norm-pNull}.	
\end{proof}

\begin{corollary}
	\label{cor:interpolation}
	Let $d\in \N\setminus\{1\}$ and $\Lambda\subset\R^{d}$ be as in Assumption~{\upshape\ref{ass:l}(ii)}.
	Let $V\in L^\infty(\R^d)$ have compact support and fix $E>0$ and  $s\in{}]d^{-1},1]$. 
	Then we have
	\begin{equation}
		\label{eq:FPDiffsmall-o}
		\| Q_{L} - Q_{L,0} \|_{2s}^{2s} = o(L^{d-1})
		\quad\text{as } L\to\infty.
	\end{equation}
\end{corollary}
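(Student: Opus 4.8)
The plan is to read off the corollary directly from Lemma~\ref{lem:interpolation} by making a suitable choice of its free parameter $p_{0}$. That lemma supplies, for every $p_{0}\in{}]0,\min\{1,2s\}]$, a constant $C$ with
\begin{equation*}
	\| Q_{L} - Q_{L,0} \|_{2s}^{2s}\le C\, L^{\beta(p_{0})},\qquad
	\beta(p_{0}):=\frac{2d(1-s)}{2-p_{0}},
\end{equation*}
uniformly for $L\ge1$. Hence it suffices to exhibit a single admissible $p_{0}$ for which $\beta(p_{0})<d-1$; then $L^{\beta(p_{0})}=o(L^{d-1})$ and \eqref{eq:FPDiffsmall-o} follows.

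For $d\ge2$ the condition $\beta(p_{0})<d-1$ is equivalent to $p_{0}<2-\tfrac{2d(1-s)}{d-1}$, so the first thing to verify is that the right-hand side is strictly positive. This is where the hypothesis $s>d^{-1}$ enters: it gives $1-s<(d-1)/d$, hence $\tfrac{2d(1-s)}{d-1}<\tfrac{2d}{d-1}\cdot\tfrac{d-1}{d}=2$, so $2-\tfrac{2d(1-s)}{d-1}>0$. Consequently the interval
\begin{equation*}
	p_{0}\in{}\Big]0,\ \min\Big\{1,\ 2s,\ 2-\tfrac{2d(1-s)}{d-1}\Big\}\Big[
\end{equation*}
is nonempty, and any $p_{0}$ in it is admissible for Lemma~\ref{lem:interpolation} and delivers $\beta(p_{0})<d-1$; this proves the claim. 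In the endpoint case $s=1$ one simply has $\beta(p_{0})=0$ for every $p_{0}$, so the statement reduces, using $d\ge2$, to the boundedness of Hilbert--Schmidt norms \eqref{eq:HSbounded}.

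I do not expect a genuine obstacle here. The only point requiring care is the elementary bookkeeping that converts the hypothesis $s>d^{-1}$ into the strict inequality $\beta(p_{0})<d-1$, together with the check that the chosen $p_{0}$ still respects the constraint $p_{0}\le\min\{1,2s\}$ imposed by Lemma~\ref{lem:interpolation}. Both are immediate from the displayed algebra, so the corollary is a direct consequence of the lemma and needs no new analytic input.
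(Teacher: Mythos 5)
Your proposal is correct and follows essentially the same route as the paper: both invoke Lemma~\ref{lem:interpolation} and observe that the hypothesis $s>d^{-1}$ allows one to choose $p_{0}$ small enough that the exponent $2d(1-s)/(2-p_{0})$ drops strictly below $d-1$ (the paper phrases the equivalence as $s>d^{-1}+p_{0}(d-1)/(2d)$, you solve for $p_{0}$ instead, which is the same bookkeeping). No gap.
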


\begin{proof}
	For $s\in {}]0,1]$ and $p_{0}\in {}]0,2[\,$ we observe
	\begin{equation}
 		2d (1-s) /(2-p_{0}) < d-1 \quad \Longleftrightarrow \quad
		s > d^{-1} + p_{0} (d-1)/(2d).
	\end{equation}
	Now, the claim follows from Lemma~\ref{lem:interpolation} because $p_{0}$ can be chosen arbitrarily close to zero.
\end{proof}

\begin{remark}
 	We summarise the results of this section in the following structural reformulation of 
	Theorem~\ref{thm:diff} which is independent of the concrete forms of the unperturbed operator $H_{0}$ and of the perturbed operator $H$: \\[1ex]
	\emph{Let $H_{0}$ and $H$ be two self-adjoint operators that are bounded below. Consider a Fermi energy $E\in\R$, 
	a bounded subset $\Lambda \subset \R^{d}$, 
	coefficients $N_{0}(E,\Lambda)>0$ and $\Sigma_{0}(E,\Lambda)>0$ 
	and a linear functional $h \mapsto I(h)$, defined on test functions $h$ as specified above \eqref{eq:widom}
	and depending on $h$ only through 
	$h - \mathrm{id}\,h(1)$ and such that 
	\begin{align}
 	\bullet\quad & I(h) \ge 0 \quad \text{whenever $h \ge 0$ and $h(1)=0$,} \notag \\
	\bullet\quad & \lim_{n\to\infty} I(h_{n}) = 0 \quad \text{whenever~~} 
		\lim_{n\to\infty} \int_{0}^{1} \d\lambda \, |h_{n}(\lambda)| = 0 
		\;\,\text{and $h_{n}\big|_{\mathcal{N}}=0$ } \notag \\ 
	& \hspace{2cm} \text{for all $n\in\N$ with an $n$-independent neighbourhood $\mathcal{N}$ of both $0$ and $1$.} \notag
	\end{align}
	Suppose that, given any test function $h$ as specified above \eqref{eq:widom}, the asymptotics  
	\begin{equation}
		\label{eq:widom-abs}
 		\tr\big\{h\big(1_{\Lambda_L}1_{<E}(H_0)1_{\Lambda_L}\big)\big\}
		= N_{0}(E,\Lambda) h(1) L^{d} + \Sigma_{0}(E,\Lambda) I(h) \,  L^{d-1} \ln L 
			+ {o}\big(L^{d-1} \ln L\big) 
	\end{equation}
	holds as $L\to\infty$.	
	Assume further that the a-priori-norm bound \eqref{eq:pNorm-bound} holds for both $H_{0}$ and $H$. Finally, 
	suppose the validity of \eqref{eq:HSbounded} and of the conclusions of Lemmas~\ref{lem:P-diff-2norm} and~\ref{lem:P-diff-1norm}. 
	Then,  for every test function $h \in \HH_{d}$, we have 
	\begin{equation}
	 	\tr\Big\{h\big(1_{\Lambda_L}1_{<E}(H)1_{\Lambda_L}\big) - 
			h\big(1_{\Lambda_L}1_{<E}(H_{0}) 1_{\Lambda_L}\big)\Big\} 
		= {o}(L^{d-1} \ln L) 
	\end{equation}
	as $L\to\infty$.
	}
\end{remark}

\newcommand{\noopsort}[1]{} \newcommand{\singleletter}[1]{#1}
%


\end{document}